%% file: NetOrientations.tex
\newif\ifConference
\newif\ifJournal
\newif\ifAnonymous
\newif\ifFinal

\Conferencetrue
\Finaltrue

\documentclass[a4paper,USenglish,cleveref,autoref,thm-restate,numberwithinsect,nameinlink]{lipics-v2021}

\input{commands.tex}

\ifAnonymous

	\author{Anonymous Author(s)}
	{Anonymous Affiliation(s)}
	{}
	{}
	{}
	
	\authorrunning{Anonymous Author(s)}

\else
	
	\author{Jannik Schestag}
	{Delft University of Technology, Delft, The Netherlands \and \url{www.JannikSchestag.eu}}
	{j.t.schestag@tudelft.nl}
	{https://orcid.org/0000-0001-7767-2970}
	{Partially funded by the Dutch Research Council (NWO), project OCENW.GROOT.2019.015 “Optimization for and with Machine Learning (OPTIMAL)”.}
	\author{Norbert Zeh}
	{Dalhousie University, Halifax, Canada \and \url{https://web.cs.dal.ca/~nzeh/}}
	{nzeh@cs.dal.ca}
	{https://orcid.org/0000-0002-0562-1629}
	{Research supported by NSERC Discovery Grant number RGPIN-2025-06235.}
	
	\authorrunning{Schestag \& Zeh}
	\Copyright{Schestag \& Zeh}

\fi

\keywords{Orienting phylogenetic networks, generators, level, parameterized complexity}

\ccsdesc{Applied computing~Computational biology}
\ccsdesc{Theory of computation~Fixed parameter tractability}

\hideLIPIcs

\Crefname{rrule}{Rule}{Rules}
\Crefname{observation}{Obervation}{Observations}

\usetikzlibrary{shadows}

\newcommand{\TTsO}{\PROB{Topological Simultaneity Ordering}}
\newcommand{\TTsOshort}{\PROB{TSO}}

\newcommand\C{\mathcal{C}}
\newcommand\N{\mathcal{N}}

\newcommand\U{\mathcal{U}}
\newcommand\Z{\mathcal{Z}}
\newcommand\gen[1]{#1_{\textrm{gen}}}

\newcommand{\AlgTB}{\hyperref[alg:tb]{{\PROB{AlgTB}}}\xspace}
\newcommand{\AlgOrch}{\hyperref[alg:o]{{\PROB{AlgOrch}}}\xspace}
\newcommand{\AlgTSO}{\hyperref[alg:solve-TTsO]{{\PROB{AlgTSO}}}\xspace}
\newcommand{\AlgTS}{\hyperref[alg:ts]{{\PROB{AlgTS}}}\xspace}
\newcommand{\AlgTC}{\hyperref[alg:tc]{{\PROB{AlgTC}}}\xspace}
\newcommand{\AlgRV}{\hyperref[alg:rv]{{\PROB{AlgRV}}}\xspace}
\newcommand{\AlgNorm}{\hyperref[alg:norm]{{\PROB{AlgNorm}}}\xspace}
\newcommand{\AlgTemp}{\hyperref[alg:temp]{{\PROB{AlgTemp}}}\xspace}

\newcommand{\todos}[1]{\todo[color=red!25!green!50]{ #1}}
\newcommand{\todosi}[1]{\todo[inline,color=red!25!green!50]{ #1}}
\newcommand{\nz}[2][]{\todo[#1,color=red!30]{ #2}}

\usepackage{hyphenat}
\newcommand{\appendixstar}{\ensuremath{\textcolor{red}{\bigstar}}\xspace}

\title{Orienting Unrooted Binary Networks Faster:\\
Focus on the Generator}

\titlerunning{Faster Network Orientation}

\nolinenumbers
\hideLIPIcs

\begin{document}

\maketitle

\begin{abstract}\noindent
	The problem of orienting an unrooted network to obtain a specific class of rooted phylogenetic networks is known to be NP-hard in many cases.
	In this paper, we introduce two algorithmic frameworks that yield significantly improved fixed-parameter tractable (FPT) algorithms parameterized by the network level~$\ell$.
	Our first main contribution shows that for several prominent network classes, the core algorithmic difficulty lies in finding a directed spanning tree on the network's undirected generator.
	By enumerating these spanning trees in $\Oh(5.3334^\ell + \ell)$ time and orienting all remaining edges in polynomial time, we solve the orientation problem in $\Oh(5.3334^\ell \cdot n)$ time for tree-based networks and in $\Oh(5.3334^\ell \cdot n^2)$ time for orchards, where~$n$ is the number of vertices of the graph.
	Extending this approach with further branching yields $\Oh(10.6667^\ell \cdot n^2)$-time algorithms for tree-child and normal networks.
	Our second technique bypasses spanning trees by directly guessing the placement of reticulations on the generator.
	This framework provides $\Oh(12.2071^\ell \cdot n^2)$-time algorithms for temporal, reticulation-visible, and tree-sibling networks.
	Finally, we demonstrate the versatility of the reticulation-guessing framework by showing that even computing an orientation with minimum scanwidth is single-exponential FPT with respect to the level.
	Together, these results significantly improve the best-known running times for phylogenetic network orientation.
\end{abstract}

\section{Introduction}
\label{sec:intro}

Phylogenetic networks are a crucial tool in evolutionary biology that extends
traditional phylogenetic trees to non-linear evolutionary histories that may also
contain reticulate evolutionary events such as hybridization, horizontal gene
transfer, and recombination. In practice, many network reconstruction methods
infer unrooted networks from undirected data, leaving the evolutionary
direction of the edges unresolved. To interpret these networks biologically, it
is necessary to orient the edges to form a valid directed acyclic graph with a
single root. However, not all orientations produce biologically meaningful
networks, motivating the problem of finding an orientation that belongs to a
well-studied class of phylogenetic networks, such as tree-child, tree-based, or
orchard networks.

The algorithmic problem of orienting undirected graphs into specific classes of
phylogenetic networks has received significant attention in recent years. While
the general orientation problem can be solved in linear time on degree-3
networks~\cite{bulteau2023turning}, restricting the output to network classes
with specific desirable properties drastically increases the computational
complexity. The orientation problem has been shown to be NP-hard for tree-based
networks~\cite{huber2024orienting}, orchards~\cite{dempsey2024wild}, and
tree-child networks~\cite{docker2025existence, iersel2026class}.

Heuristic and exact approaches for orienting
networks have been explored in Verzijlbergen's
thesis~\cite{VerzijlbergenThesis} and further mathematical conditions when a
network can be oriented into a tree-child network have been analyzed
in~\cite{maeda2023orienting}. Approaches within the realm of parameterized
complexity gained popularity in this field. In~\cite{Ortientation}, the closely
related problem of finding degree-constrained acyclic orientations has been
studied as part of parameterized approaches to network orientation.
In~\cite{huber2024orienting}, a general framework for network orientation was
presented that shows that orienting into a network class $\C$ is FPT with
respect to the level of the network, provided the class $\C$ satisfies three
mild conditions. The level of a network is an important parameter in the study
of phylogenetics, as it measures the number of reticulation events in blobs
(i.e., reticulate components). Though demonstrating that the orientation
problem for many classes is in FPT~\cite{huber2024orienting}, the running time
achieved by their algorithm is impractical.  The paper presents rather loose
bounds that depend on the specific network class but, for many network classes,
would exceed $\Oh^*(1024^\ell)$.  A more careful analysis seems to yield running times
around $\Oh^*(28.6^\ell)$ for some network classes.  Recently, an impressive
FPT-algorithm that is fast in practice has been proposed
in~\cite{urata2026orientability}, though with a larger parameter.

In this paper, we present two algorithmic frameworks that significantly improve
the best-known running times for orienting degree-$3$ networks into various
network classes.

Both of our approaches leverage network generators \cite{kelk-linz-tbr}. We
show that for tree-based networks and orchards, the core of the orientation
problem is equivalent to finding the correct directed spanning tree on the
generator. Since the generator of a level-$\ell$ network has at most $2\ell$
vertices, we can enumerate its spanning trees in $\Oh(5.3334^\ell + \ell)$
time. Given the correct spanning tree, the remainder of the problem is solvable
in polynomial time for tree-based and orchard networks, yielding
$\Oh(5.3334^\ell \cdot n)$-time and $\Oh(5.3334^\ell \cdot
n^2)$-time algorithms for these classes, respectively. By incorporating some
further structural guesses on paths of non-tree edges incident to leaves, we
extend this technique to solve the orientation problem for tree-child and
normal networks in $\Oh(10.6667^\ell \cdot n^2)$ time.

Similar to the approch taken in \cite{huber2024orienting}, our second approach
guesses the locations of reticulations but much less precisely than in
\cite{huber2024orienting}. We believe this technique to be powerful enough to
solve the orientation problem for most network classes and demonstrate its
usefulness by obtaining $\Oh(12.2071^\ell \cdot n^2)$-time algorithms
for temporal, reticulation-visible, and tree-sibling networks.  We also
demonstrate that this approach can be used to prove that computing a network
orientation with minimum scanwidth is FPT in the level.

Theorems and lemmas marked with \appendixstar are proven partly or completely in Appendix~\labelcref{adx:proofs}.

\section{Preliminaries}

\label{sec:prelims}

We use the notations $[n] := \{1,\dots,n\}$, and $[n]_0 := \{0,\ldots,n\}$. The
$\Oh^*$ and $\tilde\Oh$ notation suppresses polynomial and poly-logarithmic
factors in the input size, respectively.

We use standard graph-theoretic notation as in~\cite{diestel} and define only
some terms here. We consider finite and simple graphs~$G$ with vertex
set~$V(G)$ and edge set~$E(G)$. An \emph{undirected binary (phylogenetic)
network} is an undirected graph $\U$ whose vertices have degree $3$
(\emph{internal vertices}) or $1$ (\emph{leaves}). We write~$\{u,v\}$ for
undirected edges and~$uv$ for directed edges, which we call \emph{arcs}.

A \emph{spanning tree} of a connected undirected graph~$G$ is a connected
acyclic subgraph that contains all vertices of~$G$. A \emph{topological
ordering} of a directed acyclic graph is a linear ordering of its vertices such
that for every arc~$uv$, $u$ appears before $v$ in the sequence. A~graph is
\emph{bipartite} if its vertex set can be partitioned into two independent sets
$V_1$ and $V_2$.
A~\emph{matching} in a graph is a set of edges sharing no common vertices.

A \emph{$2$-edge-connected component} of an undirected graph is a maximal
subgraph that remains connected after the removal of any single edge. An edge
whose removal increases the number of connected components in the graph is a
\emph{cut-edge}.  A \emph{blob} of an unrooted binary network is a subgraph
induced by the vertices in a $2$-edge-connected component and all vertices
adjacent to this component.

A \emph{rooted binary (phylogenetic) network} is a directed acyclic graph with
a single source, called its \emph{root}, which has out-degree $2$; all other
vertices have in-degree~1 and out-degree~0 (\emph{leaves}), in-degree~1 and
out-degree~2 (\emph{tree-vertices}) or in-degree~2 and out-degree~1
(\emph{reticulations}).
In biological applications leaves are usually associated with species.
The \emph{reticulation number} of a rooted binary network $\N$ is the number of
reticulations, which is equal to $|E(\N)| - |V(\N)| + 1$.  We use the latter to
define the reticulation number also of unrooted networks.  The \emph{level} of
a binary network is the maximum reticulation number of its blobs.

An undirected network~$\U$ can be \emph{oriented} into a rooted network~$\Net$
if there is an edge~$e \in E(\U)$ such that the underlying undirected graph
of~\Net can be obtained from $\U$ by subdividing a single edge (the
introduced vertex is the root of $\N$).
For some class of rooted networks~$\C$, an undirected network~$\U$ can be
\emph{$\mathcal{C}$-oriented} if~$\U$ can be oriented into a rooted network $\N
\in \C$. The main problem studied in this paper is this:
\problemdef
{\textsc{$\mathcal{C}$-Orientation}}
{An undirected binary network~$\U$.}
{A $\mathcal{C}$-orientation of $\U$ if it exists, and NO otherwise.}

\looseness=-1
We define specific network classes as needed later in the paper.
See~\cite{Hellmuth2023,Kong2022} for biological interpretations.
A network class $\C$ is \emph{blob-determined} if a network $\N$ belongs to
$\C$ if and only if its blobs belong to $\C$.  This concept, introduced in
\cite{huber2024orienting}, is the key to obtaining algorithms for
orientation problems that are FPT in the level, as it suffices to find valid
orientations for all blobs independently (but choosing the root in each blob
consistently with the global placement of the root).

\section{Undirected Generators and Enumerating Spanning Trees}

Central to our approach is the concept of an undirected generator \cite{kelk-linz-tbr}.

\begin{definition}
  The \emph{generator} $\gen{\U}$ of an undirected network $\U$ is the graph obtained
  by removing leaves until no leaves remain and then suppressing degree-$2$
  vertices.
\end{definition}

For most network classes $\C$, our algorithm for finding a $\C$-orientation
starts by guessing the placement of the root.  Introducing this root creates a
degree-$2$ vertex.  When constructing the generator of a network $\U$
\emph{after} placing the root, the root is the only degree-$2$ vertex we do not
suppress during the construction of $\gen{\U}$. Also, after placing the root
$\rho$, the root $\rho_B$ of every blob $B$ of $\U$ that does not contain
$\rho$ must be the degree-$3$ vertex in $B$ closest to~$\rho$ in $\U$. We do not
consider the cut-edge incident to $\rho_B$ to be part of the blob and do not
suppress $\rho_B$ when constructing the generator $\gen{B}$ of $B$.

To avoid confusion with the edges of $\U$, we call the edges of $\gen{\U}$
\emph{sides}. Each side $e = \{u,w\}$ of $\gen{\U}$ corresponds to a
caterpillar consisting of a path~$P_e = \langle u, v_1, \dots, v_{k(e)}, w
\rangle$ of length~$k(e)+1$ where each vertex~$v_i$ neighbours a leaf~$x_i$. We
call $e$ a \emph{$k(e)$-side} or a~\emph{$\geq q$-side} for~$q \leq k(e)$ to
specify its number of pendant leaves.

\begin{observation}
	\label{obs:comp-undir-generator}
	The generator of an undirected network can be computed in linear time.
\end{observation}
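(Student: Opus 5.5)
The plan is to give a direct algorithm with two linear-time phases: first strip all leaves iteratively, then suppress all degree-$2$ vertices. Throughout we work with an adjacency-list representation of $\U$ with $n$ vertices. Since every vertex has degree at most~$3$, we have $|E(\U)| = \Oh(n)$, so each adjacency list has constant length. This means any single local modification (deleting a vertex or edge, or rewiring an endpoint) costs $\Oh(1)$ time.

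\textbf{Phase 1: leaf removal.} We initialize a queue containing all degree-$1$ vertices and maintain the current degree of every vertex in an array. We then repeatedly do the following. Pop a vertex $v$; if its current degree is still~$1$, delete it, decrement the degree of its unique neighbour $u$, and push $u$ if $u$ now has degree~$1$. (If $u$ drops to degree~$0$, the graph was a tree and the process ends; this degenerate case can be handled separately.) Each vertex is deleted at most once and each deletion costs $\Oh(1)$, so this phase takes $\Oh(n)$ time. Its result is the unique maximal subgraph of minimum degree at least~$2$, which is the graph obtained by ``removing leaves until no leaves remain'' as in the definition. The root $\rho$ (when present) has degree~$2$, so it is never removed here.

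\textbf{Phase 2: suppression.} In the remaining graph, every vertex has degree $2$ or $3$. The degree-$2$ vertices form vertex-disjoint maximal paths whose endpoints are degree-$3$ vertices, except for cycle components consisting entirely of degree-$2$ vertices. Instead of suppressing vertices one at a time, we trace paths. From each degree-$3$ vertex $u$, for each incident edge, we walk along degree-$2$ vertices until we reach a degree-$3$ vertex $w$, and we record the side $\{u,w\}$. While walking, we may also record the number $k(e)$ of vertices on the path whose leaf-neighbour was removed in Phase~1. A visited mark on path vertices ensures each path is traversed only once, or at most twice if we walk it from both ends, which we avoid by marking. The vertices we must not suppress (the root $\rho$ and the blob roots $\rho_B$) are simply treated as endpoints, like degree-$3$ vertices. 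Every vertex and edge is touched $\Oh(1)$ times, so this phase also takes $\Oh(n)$ time.

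\textbf{Main obstacle.} The only real subtlety is the degenerate outputs of suppression: a path whose two endpoints coincide (a loop), parallel sides, and cycles that contain no degree-$3$ vertex. The generator is therefore naturally a multigraph. I would handle these cases by allowing loops and parallel sides in the output, and by collapsing a pure cycle to a single vertex with a loop. This is consistent with how generators are used in \cite{kelk-linz-tbr}. None of these cases affects the running time, so the overall bound is $\Oh(n)$, i.e.\ linear.
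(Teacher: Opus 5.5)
Your two-phase argument (queue-based leaf stripping down to the $2$-core, then tracing maximal paths of degree-$2$ vertices to emit sides) is correct, and it is the routine reasoning the paper takes for granted, since it states this observation without proof. Two cosmetic fixes: mark edges rather than only internal path vertices, so that a $0$-side is not recorded once from each endpoint; and drop the claim that $\rho$ is never removed in Phase~1, because a root placed on an edge of a pendant tree does get stripped, which your algorithm as written already handles correctly.
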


\begin{lemma}[Lemma~1 in \cite{kelk-linz-tbr}]
	\label{lem:size-gen}
	Let $B$ be a level-$\ell$ blob.  Then
  the generator $\gen{B}$ has $2\ell - 2$ vertices and $3\ell - 3$ edges.
\end{lemma}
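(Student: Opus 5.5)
We prove the lemma by counting: compute the cyclomatic number of $B$ and observe that the generator construction preserves it, then combine this with the degree condition on $\gen{B}$.

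\textbf{Step 1 (invariance).} Let $r(G) := |E(G)| - |V(G)| + 1$ for a connected graph $G$. By definition $r(B) = \ell$, since $B$ is a level-$\ell$ blob (in the extremal case its reticulation number equals $\ell$). Deleting a leaf together with its pendant edge removes one vertex and one edge, so $r$ is unchanged and the graph stays connected. Suppressing a degree-$2$ vertex $v$ with neighbours $u,w$ removes $v$ and its two edges and adds the side $\{u,w\}$, again a net loss of one vertex and one edge, so $r$ is unchanged. Hence $r(\gen{B}) = \ell$.

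\textbf{Step 2 (degrees).} Next we show that every vertex of $\gen{B}$ has degree exactly $3$. Since $B$ is binary, internal vertices have degree $3$. Removing leaves only lowers degrees, and a vertex that drops to degree $1$ is itself removed. Any vertex left with degree $2$ is suppressed. Suppression never changes the degree of the endpoints $u,w$, provided no loop is created. So the vertices that survive keep degree $3$.

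Two degenerate events must be excluded: suppression could create a loop or a multi-edge, or the whole blob could collapse to a single cycle. Because $B$ contains a $2$-edge-connected component, its core is never deleted. The convention of keeping the blob root $\rho_B$ unsuppressed needs care here. It has degree $2$ in $B$ once the incoming cut-edge is excluded, and this is the one place where the count could be off by one. I would follow the setting of \cite{kelk-linz-tbr} and state the count for the generator without the special root, treating the root-kept case as a separate remark. For $\ell \ge 2$, a single cycle would have $r = 1 \ne \ell$, so the collapse cannot occur. Multi-edges (as in the level-$2$ generator with two vertices and three parallel sides) must be allowed as multigraph sides. I would note this explicitly, since the paper otherwise considers simple graphs.

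\textbf{Step 3 (counting).} With $\gen{B}$ a connected cubic multigraph on $n$ vertices and $m$ edges, the handshake lemma gives $2m = 3n$. Combining this with $m - n + 1 = \ell$ gives $n/2 = \ell - 1$. Hence $n = 2\ell - 2$ and $m = 3\ell - 3$.

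The main obstacle is Step 2, namely justifying cubicity. This requires handling the degenerate cases (loops, and the level-$1$ case where the generator is a single vertex with a loop, which is consistent with $n=0$ only if treated specially) and reconciling the formula with the unsuppressed root $\rho_B$. I expect to resolve this by stating the lemma for $\ell \ge 2$ and multigraph generators, exactly as in \cite{kelk-linz-tbr}.
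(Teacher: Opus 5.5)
Your proof is correct and is the standard counting argument behind the cited Lemma~1 of Kelk and Linz; the paper itself only cites it. The argument is that $|E|-|V|$ is unchanged by leaf deletion and degree-$2$ suppression, and then the handshake identity $2m=3n$ for the resulting cubic multigraph forces $n=2\ell-2$ and $m=3\ell-3$. Your caveats are also appropriate: the count needs $\ell\ge 2$ and multigraph sides, and it applies to the unrooted generator, since keeping $\rho_B$ unsuppressed adds one vertex and one side. Loops, which in fact cannot arise in a $2$-edge-connected core with $\ell\ge 2$, would not disturb the count anyway because each contributes $2$ to the degree.
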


\looseness=-1
The number of spanning trees of a $\Delta$-regular multigraph of order~$n$
converges, as $n \to \infty$, to
\[
  \left(\frac{(\Delta-1)^{\Delta-1}}{\left(\Delta^2 - 2\Delta\right)^{\Delta/2 - 1}}\right)^n \text{ \cite{mckay1983spanning}}.
\]
For $\Delta = 3$, this evaluates to $\left({4}/{\sqrt{3}}\right)^n$. Given a
graph with $n$ vertices and $m$ edges, all~$T$~spanning trees can be enumerated
in $\Oh(n + m + T)$ time~\cite{kapoor1995algorithms,shioura1997optimal}.

Since $\gen{B}$ is a cubic multigraph of order $2\ell - 2$, for every blob $B$,
the number of spanning trees of $\gen{B}$ is asymptotically
\[
\Oh\!\left(\left(\frac{4}{\sqrt{3}}\right)^{2\ell - 2}\right) = \Oh\!\left(\left(\frac{4}{\sqrt{3}}\right)^{2\ell}\right) = \Oh\!\left(\left(\frac{16}{3}\right)^{\ell}\right).
\]

\begin{lemma}[\cite{kapoor1995algorithms,mckay1983spanning,shioura1997optimal}]
	\label{obs:spanning}
  Enumerating all~$\Oh(16/3^\ell)$ spanning trees of the generator of a
  level-$\ell$ blob takes~$\Oh(16/3^\ell + \ell)$~time.
\end{lemma}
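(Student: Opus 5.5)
The plan is to combine three facts already stated above: the size bound of \cref{lem:size-gen}, the asymptotic spanning-tree count for cubic multigraphs~\cite{mckay1983spanning}, and the output-sensitive enumeration algorithm~\cite{kapoor1995algorithms,shioura1997optimal}.

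First, we bound the size of the generator. By \cref{lem:size-gen}, $\gen{B}$ is a cubic multigraph with $n_g = 2\ell-2$ vertices and $m_g = 3\ell-3$ sides. If the root $\rho_B$ is kept as an unsuppressed degree-$2$ vertex, this adds only one vertex and one side, so $n_g, m_g \in \Oh(\ell)$ in either case. By \cref{obs:comp-undir-generator}, $\gen{B}$ can be built in linear time in the size of $B$. That cost is paid once per blob and charged to the polynomial part of the overall algorithm, not to this enumeration bound.

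Second, we bound the number $T$ of spanning trees. For $\Delta=3$, the McKay formula gives growth $(4/\sqrt3)^{n_g}$. Hence $T \in \Oh((4/\sqrt3)^{2\ell-2}) \subseteq \Oh((16/3)^\ell)$, as computed in the display preceding the lemma.

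This is the step I expect to be the main obstacle. McKay's result is an asymptotic statement, and it concerns random (or large-girth) regular graphs rather than \emph{every} cubic multigraph, so we need a worst-case upper bound that holds for all cubic multigraphs. I would first try to invoke the known uniform upper bound of McKay that is valid for all $\Delta$-regular graphs: $T \le c\,\frac{\log n}{n}\,\bigl((\Delta-1)^{\Delta-1}/(\Delta^2-2\Delta)^{\Delta/2-1}\bigr)^n$. I would then check that parallel sides do not break it. Where they appear, I would contract them or replace them with subdivided simple structures, and argue that this only affects the count by a constant factor per bounded gadget, or else absorb the effect into the constant. If this route fails, a fallback is a direct bound via the matrix-tree theorem together with AM--GM on the Laplacian eigenvalues. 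That gives $T \le \frac{1}{n}(2m/(n-1))^{n-1} \approx 3^{n}$, which is weaker, so the McKay bound is really needed to reach the constant $16/3$.

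Third, we apply the enumeration. The algorithm of~\cite{kapoor1995algorithms,shioura1997optimal} lists all $T$ spanning trees in $\Oh(n_g + m_g + T)$ time, where each tree is output implicitly as a difference from its predecessor. Substituting the bounds from the previous steps gives $\Oh(\ell + (16/3)^\ell)$, which is the claimed running time.
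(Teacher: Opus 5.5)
Your proposal follows the paper's argument step for step (the size bound of \cref{lem:size-gen}, McKay's count for $\Delta=3$ giving $(4/\sqrt{3})^{2\ell-2}=\Oh((16/3)^\ell)$, and the $\Oh(n+m+T)$ enumeration of \cite{kapoor1995algorithms,shioura1997optimal}), and your caution is justified: the paper uses McKay's asymptotic formula as though it were a worst-case bound, whereas McKay's uniform bound $T\le c\,\frac{\log n}{n}\,(4/\sqrt{3})^{n}$ for every cubic graph, which you propose to use, is what actually justifies the count. For the multigraph check, the closed-walk proof of that bound only uses that each vertex has at least as many closed walks of every length as the root of the infinite $3$-regular tree (its universal cover), which holds verbatim for loopless cubic multigraphs---if you instead contract a digon into a single side, compare the resulting factor $T(G)\le 5\,T(G')$ with the factor $(4/\sqrt{3})^{2}=16/3$ gained by losing two vertices, rather than ``absorbing it into the constant'', which fails when there are $\Theta(\ell)$ digons---and subdividing one side for $\rho_B$ at most doubles the count.
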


\section{Finding Orientations by Guessing the Spanning Tree}

\label{sec:spanning-trees}

\subsection{Tree-Based Networks}
\label{sec:tb}

\begin{definition}
	A network $\Net$ is \emph{tree-based} if it has a spanning tree whose leaves are leaves of $\N$.
	We call such a spanning tree a \emph{base tree} of $\Net$.
\end{definition}

This definition is independent of edge directions, yielding the following lemma.

\newcommand{\lemBaseTree}[1]{
\begin{lemma}#1
	An undirected network\/ $\U$ has a tree-based orientation if and only if\/ $\U$
	has an undirected base tree.
\end{lemma}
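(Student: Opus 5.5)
The forward direction is immediate. Suppose $\U$ has a tree-based orientation $\N$, where $\N$ is obtained by subdividing some edge $e=\{u,v\}$ of $\U$ with the root $\rho$. Let $T$ be a base tree of $\N$, so every leaf of $T$ is a leaf of $\N$. The root $\rho$ has degree $2$ in the underlying graph of $\N$ and is not a leaf of $\N$, so it cannot be a leaf of $T$. Hence $\rho$ has degree $2$ in $T$ and both arcs $\rho u$ and $\rho v$ lie in $T$. Suppressing $\rho$ in $T$, i.e.\ replacing the path $u,\rho,v$ by the edge $e$, gives a spanning tree $T'$ of $\U$. Its leaves are exactly the leaves of $T$ other than $\rho$, and these are leaves of $\U$. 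So $T'$ is an undirected base tree of $\U$.

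For the converse, let $T$ be a spanning tree of $\U$ all of whose leaves are leaves of $\U$. Since all vertices of $\U$ have degree $1$ or $3$, every internal vertex has degree $2$ or $3$ in $T$. The plan is to choose a root edge inside $T$, orient $T$ away from the root, and then orient the non-tree edges so that each becomes the second incoming arc of a reticulation. Pick any edge $e=\{u,v\}\in E(T)$, subdivide it by $\rho$, and orient $T$ away from $\rho$. In this orientation every vertex except $\rho$ has exactly one incoming tree arc. A vertex of $T$-degree $3$ gets two outgoing tree arcs, and a vertex of $T$-degree $2$ (excluding $\rho$) gets one. Each non-tree edge $\{x,y\}$ joins two vertices that have degree exactly $2$ in $T$, because leaves of $\U$ have no non-tree edges and vertices of degree $3$ in $T$ use all their edges. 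Also, every degree-$2$ vertex of $T$ other than $\rho$ is incident to exactly one non-tree edge. So the non-tree edges form a perfect matching on the set $D$ of degree-$2$ vertices of $T$ in $\U$.

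The crux is orienting each matching edge $\{x,y\}$ so that the result is acyclic and has the correct degrees. The degrees work out for either direction: if we orient $xy$, then $x$ gets out-degree $2$ and becomes a tree-vertex, while $y$ gets in-degree $2$ and out-degree $1$ and becomes a reticulation. To guarantee acyclicity, fix a topological order of the rooted tree, for instance by depth or by a DFS preorder, with ties broken arbitrarily. Then orient every non-tree edge from the earlier endpoint to the later one. Every arc now goes forward in this linear order, so the digraph is acyclic. Its unique source is $\rho$, since every other vertex has an incoming tree arc, and all degree constraints hold. Therefore the resulting $\N$ is a rooted binary network.

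Finally, $T$ with $\rho$ inserted is a spanning tree of $\N$ whose leaves are leaves of $\U$, because $\rho$ has degree $2$ in it. So $\N$ is tree-based.

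The only step needing care is that the non-tree edges form a matching on the degree-$2$ vertices of $T$, so that each endpoint gets exactly one extra arc. Once that holds, the topological-order trick handles acyclicity and the degree constraints in one stroke.
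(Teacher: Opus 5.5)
Your proof is correct and follows the paper's argument: in the forward direction you suppress the root in a base tree, and in the converse you subdivide an edge of $T$, orient $T$ away from the new root, and orient the remaining edges along a topological order of $T$. You also check explicitly that the non-tree edges form a perfect matching on the vertices of $T$-degree $2$, which verifies the degree constraints that the paper leaves implicit.
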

}
\lemBaseTree{[\appendixstar]\label{lem:un-rooted-base-tree}}
\thmtoappendix{lem:un-rooted-base-tree}{\lemBaseTree{}}{
\begin{proof}
	Given a tree-based orientation $\N$ of $\U$, discarding directions and
	suppressing the root of a rooted base tree of $\N$ yields an undirected base
	tree of~$\U$.
	
	Conversely, if $\U$ has an undirected base tree $T$, we obtain a directed
	tree $T^r$ by subdividing one of its edges to introduce a root $r$, and
	directing all edges away from $r$. Subdividing the same edge in $\U$ yields
	an undirected graph $\U^r$. We direct the edges of $\U^r \cap T^r$ as in
	$T^r$, and direct all remaining edges of $\U^r$ according to a topological
	ordering of $T^r$. This makes $\U^r$ a rooted network with directed base tree
	$T^r$.
\end{proof}
} 

\algtoappendix{Tree-Based Networks}{\AlgTB}{
	\begin{algorithm}[H]
		\caption{Orienting into Tree-Based Networks (\AlgTB)}
		\label{alg:tb}
		\begin{algorithmic}[1]
			\Require A subcubic undirected graph $G$
			\Ensure An orientation of $G$ into a tree-based network, if existent
			\State Pick an arbitrary edge to be the root and root all blobs
			\For{each blob}
			\State Compute the undirected generator $\Ggen$
			\For{each spanning tree $S$ of $\Ggen$}
			\State Reduce to an instance $H$ of \MatchBG
			\If{$H$ contains a one-sided perfect matching}
			\State Extend the orientation of $S$ in a depth-first manner
			\State Orient all remaining edges to avoid directed cycles
			\Else
			\State \Return ``No possible Tree-Based Orientation''
			\EndIf
			\EndFor
			\EndFor
			\State \Return the oriented graph
		\end{algorithmic}
	\end{algorithm}
}

By \cref{lem:un-rooted-base-tree}, deciding tree-based orientability reduces to
finding an unrooted base tree.  Moreover, as the class of tree-based networks
is blob-determined, it suffices to find an unrooted base tree for each blob
$B$. The union of these trees is a base tree of $\U$.  For every base tree $T$
of $B$, there exists a unique spanning tree $S$ of $\gen{B}$ \emph{consistent}
with~$T$ in the sense that a side $e$ of $\gen{B}$ belongs to $S$ if and only
if all edges in $P_e$ belong to $T$.

Thus, to find an unrooted base tree of a blob $B$, we enumerate all spanning
trees~$S$ of~$\gen{B}$ and test whether $B$ has a base tree $T$ consistent with
$S$.  For a fixed spanning tree $S$ of~$\gen{B}$, let $V_0$ be the set of leaves
of $S$. Let $V_1 \subseteq V_0$ be the vertices without incident~$\ge 2$-side
not in $S$, and let $E_1$ be the set of $1$-sides incident to vertices in
$V_1$.

Let $H$ be the bipartite graph obtained from $(V_1, E_1)$ by subdividing each
side $e \in E_1$ with a vertex $e$. Since $\Delta(H) \le 2$, a maximum matching
$M$ in $H$ can be found in linear time. As we show below, if $M$ does not match
all vertices in $V_1$, no base tree consistent with $S$ exists, and we proceed
to the next spanning tree. Otherwise, we construct $T$ by taking the union $T'$
of caterpillars for all $e \in S$, adding the path $\langle v,w,x \rangle$ for
each $(v,e) \in M$ (where $P_e = \langle v,w,u \rangle$ and $x$ is the leaf at
$w$), adding a path connecting each vertex $v \in V_0 \setminus V_1$ to the
closest leaf on an incident $\ge 2$-side, and greedily completing $T'$ to a
spanning tree of $B$.

\begin{theorem}
	\label{thm:tb}
	\PROB{Tree-Based Orientation} can be solved in $\Oh(5.3334^\ell \cdot n)$ time.
\end{theorem}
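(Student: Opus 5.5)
The proof has two parts: a correctness argument for the procedure described just before the theorem, and a running-time bound. By \cref{lem:un-rooted-base-tree}, it suffices to decide whether $\U$ has an undirected base tree and to construct one. Tree-basedness is blob-determined, so I will work on each blob $B$ separately. Every base tree $T$ of $B$ determines a unique consistent spanning tree $S$ of $\gen{B}$. So it is enough to show that, for a fixed $S$, a base tree consistent with $S$ exists if and only if the matching $M$ in $H$ saturates $V_1$, and that the construction in the text then produces such a tree.

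\emph{Necessity.} Suppose $T$ is a base tree consistent with $S$. Let $v\in V_0$ be a leaf of $S$, and let $e_S$ be its unique side in $S$. Then $v$ is a degree-$3$ vertex of $B$ and cannot be a leaf of $T$, so $T$ must use an edge of some incident side $e\notin S$. Since $e\notin S$, not all of $P_e$ lies in $T$, so the walk from $v$ along $P_e$ in $T$ must end at a pendant leaf of $P_e$. If $v\in V_1$, every incident non-$S$ side is a $1$-side $\langle v,w,u\rangle$. The single internal vertex $w$ cannot serve both $v$ and $u$: if it did, all of $P_e$ would lie in $T$, contradicting $e\notin S$. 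Hence $v\mapsto e$ is a matching in $H$ saturating $V_1$.

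\emph{Sufficiency.} Assume $M$ saturates $V_1$. Take the caterpillar spines of all sides in $S$; this forms a tree $T'$ that touches every generator vertex. Add, for each $(v,e)\in M$, the path from $v$ to the leaf at the middle vertex of $e$. For each $v\in V_0\setminus V_1$, add the path from $v$ along an incident $\ge2$-side to the nearest pendant leaf. These additions are vertex-disjoint pendant paths from distinct $S$-leaves, and a $\ge2$-side has at least two internal vertices, so two endpoints never claim the same one. Hence no cycles arise and no degree-$3$ vertex becomes a $T$-leaf. Every remaining vertex of $B$ is an internal vertex of some side path, and each can be attached to its pendant leaf and greedily joined to the tree. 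Every non-leaf vertex of $S$ has degree at least $2$ in $T'$ already, so the leaves of the final spanning tree are leaves of $B$. Finally, \cref{lem:un-rooted-base-tree} turns the union over all blobs into a rooted tree-based orientation in linear time via a topological ordering.

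\emph{Running time.} Computing all generators takes linear time by \cref{obs:comp-undir-generator}. For a blob of level $\ell_B$, \cref{obs:spanning} enumerates the $\Oh((16/3)^{\ell_B})$ spanning trees. For each tree, building $H$ and matching on paths and cycles (since $\Delta(H)\le2$) takes time linear in the size of $\gen{B}$, that is, $\Oh(\ell_B)$. The size-$|B|$ construction is run only once, for the successful tree. Summing over blobs gives $\sum_B \ell_B(16/3)^{\ell_B}+\Oh(n)\le \Oh(n\cdot \ell(16/3)^\ell)$, and the polynomial factor in $\ell$ is absorbed by rounding $16/3$ up to $5.3334$, giving $\Oh(5.3334^\ell\cdot n)$.

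The main obstacle is the sufficiency construction: checking that the greedy completion really never creates a degree-$3$ leaf or a cycle. The delicate cases are sides not in $S$ whose two endpoints are both in $V_0\setminus V_1$ and that have exactly two pendant vertices; there I would argue explicitly that each endpoint takes its own nearest pendant leaf.
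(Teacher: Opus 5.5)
Your proposal is correct and follows the paper's proof essentially step for step. Necessity matches each $v \in V_1$ to the non-$S$ $1$-side that its pendant path in $T$ enters; these sides are distinct because a shared middle vertex would put all of $P_e$ into $T$. Sufficiency uses the caterpillars of $S$, the matched paths and nearest-leaf paths on $\ge 2$-sides, and then greedy completion. The running time comes from per-blob enumeration of the spanning trees of $\gen{B}$.

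Two cosmetic points. First, a non-$S$ side at $v \in V_1$ may also be a $0$-side; it is excluded only because it contributes no edge to $T$. Second, in the time bound, $\sum_B \ell_B = \Oh(n)$ already gives $\Oh((16/3)^\ell \cdot n)$, so you do not need to absorb an extra factor $\ell$ into the base.
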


\begin{proof}
	\proofpara{Correctness}
  By \cref{lem:un-rooted-base-tree} and since the class of tree-based networks
  is blob-determined, it suffices to find an unrooted base tree of each blob
  $B$.  Given a spanning tree $S$ of $\gen{B}$ and a matching $M$ of the
  corresponding graph $H$ that matches all vertices in $V_1$, our construction
  clearly yields a base tree.

  Conversely, if $B$ has a base tree $T$, then $T$ is consistent with some
  spanning tree~$S$ of~$\gen{B}$ and must contain a path $P_v$ from each vertex
  $v \in V_1$ to a distinct leaf of $B$ attached to an incident side $e \in E_1$.
  Setting $M = \{\{v,e\} \mid P_v \subseteq e\}$ forms a valid matching in $H$
  because no two paths $P_u$ and $P_v$ can end at the leaf of the same
  $1$-side. Thus, the graph $H$ corresponding to $S$ has a matching that
  matches all vertices in $V_1$.
	
	\proofpara{Running time}
  By \Cref{obs:spanning}, enumerating spanning trees of $\gen{B}$ takes
  $\Oh(16/3^\ell + |B|)$ time per blob $B$. Constructing $H$, computing the
  matching $M$, and expanding $S$ to a base tree $T$ of $B$ takes linear time
  per spanning tree $S$ of $B$. Summing across all blobs thus yields a total
  cost of $\Oh(16/3^\ell \cdot n) = \Oh(5.3334^\ell \cdot n)$.
\end{proof}

\subsection{Orchards}
\label{sec:orchards}

\begin{definition}
  An \emph{orchard} is a rooted tree-based network $\Net$ that has a base tree
  $T$ and a labelling $\lambda : V(\Net) \to \mathbb{Q}$ such that $\lambda(u) <
  \lambda(v)$ for each arc $uv \in T$, and $\lambda(u) = \lambda(v)$ for
  each arc $uv \notin T$. Such a labelling $\lambda$ is called an
  \emph{HGT-consistent labelling} of $\Net$.
\end{definition}

We solve the following problem as a subroutine.

\problemdef
{\TTsO (\TTsOshort)}
{An undirected graph $G = (V, E)$ with $V = V_1 \uplus V_2$ and $E = E_= \uplus E_{\ne}$,
	and a set of arcs $A \subseteq (V_1 \cup V_2)^2$.}
{A labelling $\lambda: V_1 \cup V_2 \to \mathbb{Q}$ such that
	{\begin{enumerate}[(1)]
			\item~$\lambda(v) = \lambda(w)$ for each edge $\{v,w\} \in E_=$,
			\item~$\lambda(v) \neq \lambda(w)$ for each edge $\{v,w\} \in E_{\ne}$,
			\item~$\lambda(v) < \lambda(w)$ for each arc $vw \in A$, and
			\item~for each $v \in V_1$ there is an edge $\{v,w\} \in E_{\ne}$ with $\lambda(v) <
			\lambda(w)$.
	\end{enumerate}}
}

Intuitively, \TTsOshort asks for a consistent time-ordering of vertices, where
$E_=$-edges group vertices into simultaneity classes (e.g., endpoints of
reticulation edges that must receive the same label), $E_{\ne}$-edges enforce
strict separation between classes, arcs enforce strict temporal precedence,
and condition~(4) ensures that every vertex in $V_1$ has at least one neighbour
that succeeds it in the time-ordering. We call such a labelling \emph{golden}.

Let $C_1,\dots,C_k$ be the connected components of $(V, E_=)$. If there exists
an arc $uv$ or an edge $\{u,v\} \in E_{\ne}$ with $u,v \in C_j$, then no
solution exists. Otherwise, let $\Z := \emptyset$ and $h := k$. While $\Z \ne
\{C_1, \ldots, C_k\}$, choose a component $C_i \notin \Z$ such that every arc
$uv \in A$ and every edge $\{u,v\} \in E_{\ne}$ with $u \in C_i$ satisfies $v
\in C_j$ for some component $C_j \in \Z$. We label the vertices in $C_i$ with
$h$, add $C_i$ to $\Z$, decrement $h$, and repeat. If no such component exists
and not all vertices have been labelled yet, then there is no golden labelling.

\alginappendix{
	\section{Solving \TTsO (\AlgTSO)}
	\begin{algorithm}[H]
		\caption{Solving \TTsO (\AlgTSO)}
		\label{alg:solve-TTsO}
		\begin{algorithmic}[1]
			\Require An instance $(V = V_1 \uplus V_2, E_= \uplus E_{\ne}, A)$ of \TTsO
			\Ensure A golden labelling $\lambda$, if existent.
			\If{there is an edge $\{u,v\} \in E_{\ne}$ or an arc $uv \in A$ with
				$u, v \in C_j$ for some $j \in [k]$}
			\State \Return ``No golden labelling''
			\EndIf
			\State Let $i \leftarrow k$ and $\mathcal{Z} \leftarrow \emptyset$
			\While{not all vertices are labeled}
			\If{there exists a component $C_j \notin \mathcal{Z}$ such that\\
				\hspace{1.8cm} (1) for each arc $uv \in A$ with $v \in V(C_j)$, we have $u \in \mathcal{Z}$, or\\
				\hspace{1.8cm} (2) each $v \in V(C_j) \cap V_1$ has an $E_{\ne}$-neighbour in $\mathcal{Z}$}
			\State Set $\lambda(v) \leftarrow i$ for each $v \in V(C_j)$
			\State Add $V(C_j)$ to $\mathcal{Z}$ and set $i \leftarrow i - 1$
			\Else
			\State \Return ``No golden labelling''
			\EndIf
			\EndWhile
			\State \Return $\lambda$
		\end{algorithmic}
	\end{algorithm}
}

\newcommand{\lemTTsO}[1]{
\begin{lemma}#1
	\looseness=-1
	For any instance $(V_1 \uplus V_2, E_= \uplus E_{\ne}, A)$ of \TTsO,
	finding a golden labelling or deciding that none exists takes $\Oh(|V| + |E| + |A|)$ time.
\end{lemma}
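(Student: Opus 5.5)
The plan is to prove correctness and the running time separately. I read the greedy rule as in \AlgTSO; the prose version is slightly stronger than needed. A component $C_j \notin \Z$ is \emph{eligible} if every arc $uv \in A$ with $u \in C_j$ has its head $v$ in a component of $\Z$, and every $v \in C_j \cap V_1$ has an $E_{\ne}$-neighbour in a component of $\Z$. Components of $\Z$ carry strictly larger labels than everything labelled later, and each component receives its own distinct label. Hence $E_{\ne}$-edges between distinct components are satisfied automatically. The only $E_{\ne}$-edges that could fail are those inside a component, and the initial check already rejects these.

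\emph{Soundness.} Suppose the algorithm labels every vertex. Condition~(1) holds because each component gets a single label. Condition~(2) holds because, after the initial check, every $E_{\ne}$-edge joins distinct components, which receive distinct labels. Condition~(3) holds because when $C_j$ receives label $h$, each head $v$ of an arc leaving $C_j$ already lies in $\Z$ and so has a label greater than $h$; arcs inside a component are excluded by the initial check. Condition~(4) holds for the same reason, using the required $E_{\ne}$-neighbour in $\Z$.

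\emph{Completeness.} This is the main step, and I will use an extremal argument. Suppose a golden labelling $\lambda$ exists. Then $\lambda$ is constant on each component $C_i$, by condition~(1). It also has no arc or $E_{\ne}$-edge inside a component, so the initial check passes. Now consider any intermediate state with $\Z \ne \{C_1,\dots,C_k\}$. Among the components not in $\Z$, pick one, $C^*$, that maximises $\lambda$. Every arc $uv$ with $u \in C^*$ satisfies $\lambda(v) > \lambda(u)$. Every $v \in C^* \cap V_1$ has an $E_{\ne}$-neighbour $w$ with $\lambda(w) > \lambda(v)$. By maximality, these heads and neighbours cannot lie outside $\Z$, so they lie in $\Z$. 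Thus $C^*$ is eligible and the algorithm never gets stuck. Note that this argument is independent of which eligible component the greedy chose earlier. This is precisely because eligibility is monotone in $\Z$, so no backtracking is needed.

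\emph{Running time.} First compute the components of $(V,E_=)$ by BFS in $\Oh(|V|+|E|)$ time, and perform the initial check in $\Oh(|E|+|A|)$ time. Then run the greedy like Kahn's topological-sort algorithm. Each component keeps a counter of its arcs whose head component is not yet in $\Z$. It also keeps a counter of its $V_1$-vertices that still lack an $E_{\ne}$-neighbour in $\Z$, and each vertex keeps a flag recording whether it already has such a neighbour. When a component enters $\Z$, we scan the reverse adjacency of its vertices: incoming arcs decrement the tails' component counters, and $E_{\ne}$-neighbours update their flags and counters. Any component whose two counters both reach zero is pushed onto a queue of eligible components. Each arc and edge is processed $\Oh(1)$ times, giving $\Oh(|V|+|E|+|A|)$ in total. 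The only delicate point is the bookkeeping, which must ensure that each $V_1$-vertex decrements its component counter only once.
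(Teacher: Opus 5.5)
Your proof is correct and matches the paper's argument essentially step for step. Soundness is checked condition by condition, as in the paper. Your completeness argument, which picks the component with maximum $\lambda$ outside $\Z$, is the paper's choice of the largest index $i$ with $C_i \notin \Z$. The running-time analysis uses the same Kahn-style counters and per-vertex flags. One remark: your eligibility rule is exactly what the appendix proof uses, but it differs from the paper's literal greedy descriptions more than ``slightly''. The literal prose rule requires all $E_{\ne}$-neighbours to be in $\Z$, so it would get stuck on any $E_{\ne}$-edge between distinct components. The literal pseudo-code reverses the arc direction and uses ``or'' instead of ``and''.
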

}
\lemTTsO{[\appendixstar]\label{lem:solve-TTsO}}
\thmtoappendix{lem:solve-TTsO}{\lemTTsO{}}{
\begin{proof}
	\proofpara{Correctness}
	First assume that the algorithm reports a labelling $\lambda$.
	Conditions (1) and (2) are clearly satisfied because we ensure that all
	vertices in the same component receive the same label, vertices in different
	components receive different labels, and the algorithm fails if there is an
	edge in $E_{\ne}$ with both endpoints in the same component.  Conditions (3)
	and (4) are easily seen to be satisfied based on the condition for labelling
	the vertices in a component and adding that component to $\Z$.
	
  Now assume that there exists a golden labelling $\lambda$. Then the vertices
  in each connected component of~$(V,E_1)$ have the same label, and there are
  no edges in~$E_2$ or arcs in~$A$ between vertices in the same component.
  Moreover, the labelling $\lambda$ defines an ordering $\langle C_1, \ldots,
  C_k \rangle$ of the components of $(V,E_1)$.  Adding components to $\Z$ in
  the order $\langle C_k, \ldots, C_1 \rangle$ ensures that, when we add $C_i$
  to $\Z$, we have $\Z = \{C_{i+1}, \ldots, C_k\}$.  For every arc $uv$ with $u
  \in C_i$, we have $v \in C_j$ with $j > i$.  For every vertex $u \in V(C_i)
  \cap V_1$, there exists an edge $\{u,v\} \in E_{\ne}$ with $\lambda(u) <
  \lambda(v)$, so $v \in C_j$ with $j > i$.  Thus, $C_i$ can be added to $\Z$
  after adding $C_{i+1}, \ldots, C_k$ to $\Z$.  The algorithm may add
  components to $\Z$ in a different order.  However, as long as $\Z \ne \{C_1,
  \ldots, C_k\}$, there exists an index $i$ such that $C_i \notin \Z$ and $\Z
  \supseteq \{C_{i+1}, \ldots, C_k\}$. Thus, the algorithm can always adds a
  component to $\Z$ and succeeds in finding a golden labelling.
	
  \proofpara{Running time}
	Computing the connected components of $(V, E_1)$ takes $\Oh(|V| + |E_1|)$
	time. The initial check for arcs or edges in $E_{\ne}$ with both endpoints in
	the same component takes $\Oh(|E_2| + |A|)$ time.  To carry out the rest of
	the algorithm in $\Oh(|V| + |E|)$ time, we label each component $C_i$ with
	the number of arcs $uv$ such that $u \in C_i$ and the number of vertices
	$u \in V(C_i) \cap V_1$.  Every time we add a component $C_j$ to $\Z$, we
	iterate over the arcs $uv$ with $v \in C_j$ and, for each, decrease the
	arc count for the component $C_i$ that contains $u$.  We also inspect all
	edges $\{u,v\} \in E_{\ne}$ such that $v \in C_j$ and $u \in V_1$, and
	decrease the $V_1$-vertex count of the component containing $u$ unless we
	have already discovered an edge $\{u,w\} \in E_{\ne}$ with $w \in C_k$ and
	$C_k \in \Z$, which can easily be checked by maintaining a flag for each vertex
	in $V_1$. Using these counts, a component can be added to $\Z$ exactly when
	both its counts become $0$. To maintain these counts, we inspect every arc
	and every edge in $E_{\ne}$ exactly once. Thus, this takes linear time.
\end{proof}
} 

\algtoappendix{Orchards}{\AlgOrch}{
	\begin{algorithm}[H]
		\caption{Orienting into Orchard Networks (\AlgOrch)}
		\label{alg:o}
		\begin{algorithmic}[1]
			\Require A subcubic undirected graph $G$
			\Ensure An orientation of $G$ into an orchard network, if existent
			\For{each edge as potential root of the network}
			\State AllBlobsFine $\gets$ true
			\For{each blob}
			\State ThisBlobFine $\gets$ false
			\State Compute the undirected generator $\Ggen$
			\For{each spanning tree $S$ of $\Ggen$}
			\State Direct away from the root of the blob
			\State Reduce to an instance $\Instance$ of \TTsOshort
			\If{$\Instance$ has a golden labelling}
			\State Extend the labelling to all vertices
			\State Orient according to the labelling
			\State ThisBlobFine $\gets$ true
			\EndIf
			\EndFor
			\If{\textbf{not} ThisBlobFine}
			\State AllBlobsFine $\gets$ false
			\EndIf
			\EndFor
			\If{AllBlobsFine}
			\State \Return the oriented graph
			\EndIf
			\EndFor
			\State \Return ``No possible Orchard Orientation''
		\end{algorithmic}
	\end{algorithm}
}

To decide whether an undirected network can be oriented into an orchard, we
guess the root of the network, which forces the root of each blob. Since the
class of orchards is blob-determined, it now suffices to decide for each blob
$B$ independently whether it has an orchard orientation with this root.  To do
so, we guess an undirected spanning tree $S$ of $\gen{B}$ and obtain a directed
spanning tree by orienting $S$ away from the root. We construct a \TTsOshort
instance from $S$ in which $V_1$ is the set of leaves of $S$, and $V_2 = V(\gen{B})
\setminus V_1$.  $A$ is the set of arcs in $S$. $E_=$ is the set of $0$-sides
of $\gen{B}$ not in $S$. $E_{\ne}$ is the set of $1$-sides of $\gen{B}$ not in
$S$. If the resulting \TTsOshort instance has no solution, we consider the next
spanning tree of $\gen{B}$ or, if the spanning trees are exhausted, the next
root placement in $\U$. If a golden labelling $\lambda$ is found, we extend it
to all vertices of $B$ as follows:

Every arc $e = (u, w) \in A$  corresponds to a path $P_e = \langle u, v_1,
\ldots, v_k, w \rangle$; we set $\lambda(v_i) = \lambda(u) + i / (k +
1)$ for all $i \in [k]$. For each side $e = \{u,w\} \not\in S$, internal
  vertices of $P_e = \langle u, v_1, \ldots, v_k, w \rangle$ are labelled as
  follows: (i) if $k(e) = 0$, then no internal vertices need labelling; (ii) if
  $k(e) = 1$, then $e \in E_{\ne}$ and we set $\lambda(v_1) = \max(\lambda(u),
  \lambda(w))$; (iii) if $k(e) \ge 2$, then we set $\lambda(v_i) =
  \max(\lambda(u), \lambda(w)) + i$ for all $i \in[k-1]$ and $\lambda(v_k) =
  \lambda(v_{k-1})$. For any network leaf~$x$ with parent $p$, we set
  $\lambda(x) = \lambda(p) + 1$. We obtain the final orientation by directing
  all edges from vertices with lower labels to vertices with higher labels,
  breaking ties arbitrarily.

\begin{theorem}
	\label{thm:orchards}
	\PROB{Orchard Orientation} can be solved in $\Oh(5.3334^\ell \cdot n^2)$ time.
\end{theorem}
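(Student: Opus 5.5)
The proof splits, like that of \cref{thm:tb}, into correctness and running time. Correctness has two directions. Fix a root placement $\rho$, which fixes the root $\rho_B$ of every blob $B$. Since orchards are blob-determined, it suffices to show the following: $B$ has an orchard orientation rooted at $\rho_B$ if and only if, for some spanning tree $S$ of $\gen{B}$, the \TTsOshort instance built from $S$ has a golden labelling. By \cref{lem:solve-TTsO}, such a labelling is then found.

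\textbf{Soundness.} Given a golden labelling $\lambda$, I would check that the extended labelling of all vertices of $B$, together with the base tree $T$, gives an HGT-consistent labelling. Here $T$ consists of the paths $P_e$ for $e \in S$, plus, for each non-tree side, all edges except the one carrying the equal-label (reticulation) arc. The checks are as follows.
\begin{itemize}
\item Labels strictly increase along tree arcs $P_e$, $e\in S$, because $\lambda(u)<\lambda(w)$ for $uw\in A$ and the interpolation is monotone.
\item A non-tree $0$-side joins vertices of equal label ($E_=$), so it is a reticulation arc with equal ends.
\item For a non-tree $1$-side with $\lambda(u)<\lambda(w)$ (guaranteed distinct by $E_{\ne}$), set $\lambda(v_1)=\lambda(w)$. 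Then $uv_1$ and $v_1x_1$ are tree arcs and $wv_1$ is the reticulation arc with equal labels.
\item For a $\ge 2$-side, the labels climb from both ends and meet at $v_{k-1}v_k$, which carries equal labels. So $v_k$ is a reticulation fed by $v_{k-1}$ and $w$. I must verify that the arc $w v_k$ climbs, and adjust if needed (e.g.\ label climbing symmetrically from both ends).
\end{itemize}
I also need that every leaf of $T$ is a network leaf, and that every vertex gets the right in-/out-degree. Condition (4) is exactly what guarantees that each leaf $v$ of $S$ has a later neighbour along a $1$-side, so $v$ is a tree vertex of $T$ continuing to a leaf. Tree-based status then follows: degrees are consistent with orienting low-to-high, and ties occur only on non-tree arcs.

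\textbf{Completeness.} Given an orchard orientation with base tree $T$ and HGT-consistent labelling $\mu$, let $S$ be the spanning tree of $\gen{B}$ consistent with $T$, as in \cref{sec:tb}. Then $S$ is oriented away from $\rho_B$ (so $A$ is consistent), and restricting $\mu$ to $V(\gen{B})$ gives the required labelling. The main argument is case analysis on non-tree sides.
\begin{itemize}
\item A non-tree $0$-side is a single non-tree arc, so its ends have equal $\mu$.
\item A non-tree $1$-side $\{u,w\}$ must have its middle vertex $v_1$ as a reticulation. Otherwise $v_1$ would be in-degree-1 with two outgoing arcs, one to $x_1$, forcing the missing edge to be the tree-edge. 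The pieces then chain as $\mu(u)<\mu(v_1)=\mu(w)$ or symmetrically, giving $\mu(u)\neq\mu(w)$.
\item For condition (4): a leaf $v$ of $S$ must have an out-tree-arc in $T$ leading to a network leaf. That arc lies on a non-tree side, and if that side is a $1$-side the neighbour on the other end gets a larger label.
\end{itemize}
The main obstacle is the case where $v$'s only escape is via a $\ge 2$-side. In that case $v\notin V_1$ is already permitted, so I expect $V_1$ really should be defined as in \cref{sec:tb} (leaves of $S$ without incident non-tree $\ge2$-side). I would phrase the proof with that definition and verify condition (4) only for those vertices.

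\textbf{Running time.} There are $\Oh(n)$ root placements. For each, every blob $B$ enumerates $\Oh(16/3^\ell)$ spanning trees by \Cref{obs:spanning}. Each tree requires building and solving a \TTsOshort instance of size $\Oh(\ell)$ (\cref{lem:solve-TTsO}), plus extending the labelling in $\Oh(|B|)$ time. Summing over blobs gives $\Oh(16/3^\ell\cdot n)$ per root and $\Oh(5.3334^\ell\cdot n^2)$ in total.
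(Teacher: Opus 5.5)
Your proposal follows the paper's proof. You guess the root, enumerate the spanning trees $S$ of each $\gen{B}$, solve the \TTsOshort instance, and extend the golden labelling exactly as the paper does. The running-time count is also the same: $\Oh(n)$ roots, times $\Oh(16/3^\ell)$ trees, times linear work per tree.

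The one substantive deviation is your choice of $V_1$ as the leaves of $S$ without an incident non-tree $\ge 2$-side. This is in fact a needed repair. The paper sets $V_1$ to be all leaves of $S$, and its completeness direction never checks condition (4). That condition fails whenever a leaf of $S$ has a $0$-side and a $\ge 2$-side as its non-tree sides: such a leaf has no incident $E_{\ne}$-edge, yet it can take its tree-child on the $\ge 2$-side. A concrete case is the level-$2$ blob on $\rho_B,a,b$ with $0$-sides $\rho_B a$, $\rho_B b$, $ab$ and a $2$-side between $a$ and $b$. It has an orchard orientation rooted above $\rho_B$ ($\rho_B\to a$, $\rho_B\to b$, $a\to b$, reticulation inside the $2$-side). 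Yet none of the five spanning trees of its generator passes the paper's test.

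Soundness survives your change, because the paper's extension already gives both endpoints of a non-tree $\ge 2$-side a strictly larger neighbour. In particular $\lambda(v_k)=\max(\lambda(u),\lambda(w))+k-1>\lambda(w)$, so the adjustment you hedge about is not needed.

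One step needs correcting. The middle vertex $v_1$ of a non-tree $1$-side need not be a reticulation. The configuration with $u\to v_1$ and $v_1\to x_1$ in $T$, and $v_1\to w$ the non-tree arc into a reticulation $w$, is legal. Your conclusion $\mu(u)\ne\mu(w)$ still holds, by the paper's argument. Exactly one edge of $P_e$ lies outside $T$, and the other must be $v_1$'s tree in-arc, since $v_1$ needs a parent in $T$. This gives $\mu(u)<\mu(v_1)=\mu(w)$ or the symmetric chain.
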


\begin{proof}
	\proofpara{Correctness}
  If our algorithm returns an orientation and a labelling $\lambda$ for each
  blob~$B$, then it is easily verified that every non-leaf vertex $v$ has a
  neighbour $w$ with $\lambda(v) < \lambda(w)$, and every non-root vertex $v$
  has exactly one neighbour $u$ with $\lambda(u) < \lambda(v)$.  Thus, the arcs
  $xy$ with $\lambda(x) < \lambda(y)$ in the orientation form a base tree.
  Since each remaining arc $uv$ satisfies $\lambda(u) = \lambda(v)$, $\lambda$
  is an HGT-consistent labelling of $B$ with respect to this base tree, that is
  the orientation of $B$ is orchard. Since orchards are blob-determined, the
  union of the orientations of the different blobs is thus an orchard
  orientation of $\U$.

  Conversely, if $\U$ has an orchard orientation, base tree $T$, and
  HGT-consistent labelling~$\lambda$, then the restriction $T_B$ of $T$ to each
  blob $B$ is a base tree of $B$, and the restriction of~$\lambda$ to~$B$ is an
  HGT-consistent labelling of $B$ with respect to $T_B$.  $\gen{B}$ has a
  directed spanning tree consistent with $T_B$.  Since we inspect all possible
  spanning trees of $\gen{B}$, we find this spanning tree. Since $\lambda$ is
  HGT-consistent, it satisfies $\lambda(u) < \lambda(v)$ for each arc $uv \in
  S$.  Every side $e = \{u,v\} \in E_=$ is a $0$-side not in $S$ and thus
  satisfies $\lambda(u) = \lambda(v)$. Every side $e \in E_{\ne}$ is a $1$-side
  with corresponding path $P_e = \langle u, w, v \rangle$.  Exactly one of the
  edges in $P_e$ does not belong to $T_B$. Thus, we have either $\lambda(u) =
  \lambda(w) > \lambda(v)$ or $\lambda(u) < \lambda(w) = \lambda(v)$.  This
  shows that restriction $\lambda$ to $V(\gen{B})$ yields a golden labelling of
  the $\TTsOshort$ instance obtained from $B$ and $S$.  Thus, our algorithm
  finds a (possibly different) golden labelling and constructs from it a
  (possibly different) orchard orientation of~$B$.
	
	\proofpara{Running time}
  For each of the $\Oh(n)$ possible root choices, iterating over all spanning
  trees across all blobs takes $\Oh(16/3^\ell + |B|)$ time per blob $B$ by
  \Cref{obs:spanning}. Constructing and solving the \TTsOshort instances takes
  $\Oh(\ell)$ time per tree. Summing across blobs gives $\Oh(16/3^\ell \cdot
  n)$ per root choice, yielding $\Oh(16/3^\ell \cdot n^2) = \Oh(5.3334^\ell
  \cdot n^2)$ total time.
\end{proof}

\subsection{Tree-Child and Normal Networks}
\label{sec:tc}

\begin{definition}
  A phylogenetic network is \emph{tree-child} if each vertex~$u$ has at least
  one child~$v$ that is a tree-vertex or a leaf, implying a reticulation-free
  path to a leaf. We say~$v$ is a \emph{tree-child} of~$u$. A phylogenetic
  network is \emph{normal} if it is tree-child and contains no arc~$uv$ such
  that $\Net$ contains a path from $u$ to $v$ other than~$uv$. We call such an arc $uv$ a
  \emph{shortcut}.
\end{definition}

To decide whether an undirected network can be oriented into a tree-child or
normal network, we use the same overall strategy as for orchards: We guess the
root of the orientation, which fixes the root of each blob.  Since the classes
of tree-child and normal networks are blob-determined, it now suffices to find
an orientation of each blob $B$ consistent with this placement of its root. To
do so, we guess an unrooted spanning tree of $\gen{B}$ and direct its edges
away from the root.  It remains to decide whether there exists a tree-child or
normal orientation of $B$ consistent with the chosen rooted spanning tree $S$
of $\gen{B}$.

Recall that each side $e = \{u, v\}$ of $\gen{B}$ corresponds to a path $P_e$ in
$B$. Since each internal vertex of $P_e$ has an attached leaf, providing a
guaranteed tree-child for this vertex, the tree-child condition needs to be
ensured explicitly only for the vertices of $\gen{B}$.  Our main concern is how
to orient the edges in $P_e$, for each side $e = \{u,v\} \notin S$ to ensure
that $u$ and/or $v$ has a tree-child in $P_e$, if possible. If $k(e) = 0$,
then neither endpoint can have a tree-child in $P_e$; if, $k(e) = 1$, then at
most one of the endpoints has a tree-child in $P_e$; if $k(e) = 2$, then
exactly one endpoint has a tree-child in $P_e$; and, if $k(e) \geq 3$, then
placing a reticulation in the middle of $P_e$ ensures that both $u$ and $w$
have a tree-child in $P_e$.

Let $L(S)$ be the set of leaves of $S$. A \emph{leaf-cycle} is a cycle in
$\gen{B}$ edge-disjoint from $S$; thus, its vertices are in $L(S)$. A
\emph{leaf-path} is a path of at least two sides, edge-disjoint from~$S$, and
whose endpoints are not in $L(S)$.  The internal vertices are in $L(S)$. A
\emph{stretch} is a path $\langle v_0, \dots, v_t \rangle$ in $\gen{B}$ with $t
> 1$ sides, which contains no~$\ge 3$-sides nor any sides in~$S$, and such that
$v_0$ and $v_t$ are either incident with a $\geq 3$-side or not in $L(S)$.

\begin{lemma}
	\label{lem:breaking-point}
	If $B$ has a tree-child orientation, then there is a tree-child orientation $B^r$
  such that exactly one of the following holds for each stretch $\langle v_0,
  \dots, v_t \rangle$:
	\begin{itemize}
    \item $v_i$ has a tree-child in~$P_{\{v_i,v_{i+1}\}}$ for all~$i \in [t-1]_0$.
    \item $v_i$ has a tree-child in~$P_{\{v_i,v_{i-1}\}}$ for all~$i \in [t]$.
    \item There exists an index $i \in [t - 1]_0$ such that:
		\begin{itemize}
      \item each vertex $v_j$ with $j \in [i]$ has a tree-child in $P_{\{v_{j-1}, v_j\}}$,
      \item each vertex $v_j$ with $j \in \{i+1, \dots, t\}$ has a tree-child in $P_{\{v_{j+1}, v_j\}}$,
      \item $P_{\{v_i, v_{i+1}\}}$ contains a reticulation, and
      \item if~$\{v_i, v_{i+1}\}$ is a $2$-side, then $v_i$ (or $v_1$ if $i=0$) has a tree-child in~$P_{\{v_i, v_{i+1}\}}$.
		\end{itemize}
	\end{itemize}
\end{lemma}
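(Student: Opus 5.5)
We argue by an exchange argument: among all tree-child orientations of $B$ consistent with the guessed root (and, if useful, with the rooted spanning tree $S$), we pick one, $B^r$, that minimises the number of stretches violating the three patterns. We then show that any violating stretch can be locally re-oriented without affecting anything outside the union of its paths $P_{\{v_i,v_{i+1}\}}$. Only the edges inside these paths are changed, and the endpoints $v_0,\dots,v_t$ keep their degrees. So the only global constraints to protect are acyclicity and that every vertex has in-degree at most one from $S$-side directions. Every other vertex keeps its in- and out-arcs, and every vertex outside the stretch keeps its tree-child.

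First we record the local facts about a stretch. Every side is a $0$-, $1$- or $2$-side not in $S$. The interior vertices $v_1,\dots,v_{t-1}$ lie in $L(S)$, so each has exactly one $S$-side, which must be its incoming parent side; its tree-child therefore has to be found in one of its two stretch sides. In any tree-child orientation, each $P_{\{v_i,v_{i+1}\}}$ hosts a tree-child for at most one endpoint if $k\le 1$, and for exactly one endpoint if $k=2$, the other endpoint then being the parent of the reticulation. Define for each side the direction ``toward $v_{i+1}$'' if $v_i$ has a tree-child in it, and ``toward $v_i$'' otherwise.

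Each interior vertex needs one of its two sides to serve it. A counting/pigeonhole argument then shows that the sides serving interior vertices form a pattern in which at most one side serves neither interior neighbour, or two sides face each other. Reading the stretch left to right, we obtain a maximal prefix of sides serving their left endpoint $v_j$ and a suffix serving their right endpoint. At most one ``turning'' side $\{v_i,v_{i+1}\}$ remains, which serves both or neither endpoint. Serving both is possible only for $k\ge 3$, which is excluded in a stretch, and a $k=2$ side serving both is impossible. Hence the turning side serves at most one endpoint and must contain a reticulation: all paths are directed inward, so its two end-arcs both enter $P$. This yields the third pattern. If no turning side exists, we are in pattern one or two. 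If several turning points exist, i.e.\ a side serving neither endpoint somewhere in the middle, some interior vertex would lack a tree-child, which is a contradiction.

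The remaining clause is the $2$-side condition: if the turning side is a $2$-side, $v_i$ (or $v_1$ when $i=0$) should be the endpoint with the tree-child. Here the exchange is used. If instead $v_{i+1}$ is served, we can shift the reticulation within the path of two pendant vertices toward $v_{i+1}$. Alternatively, we reinterpret the turning index as $i+1$ and redirect the next side accordingly. This is the main obstacle: we must verify that flipping which endpoint of a $2$-side gets the tree-child preserves acyclicity and the tree-child property at $v_0$/$v_t$. The boundary vertices are either incident to a $\ge3$-side, which always supplies a tree-child, or not in $L(S)$, and so have an $S$-child. That these re-orientations create no directed cycle follows because all altered arcs point from vertices with the $S$-order along $S$-paths, and we can choose a topological order extending $S$ as in the proof of \cref{lem:un-rooted-base-tree}. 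Exclusivity of the three cases is immediate from the definitions.
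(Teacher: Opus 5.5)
\emph{Verdict: genuine gap.} Your plan is the paper's: a pigeonhole argument on which side serves each interior vertex, then a local re-orientation of the single ``turning'' side. But you state the pattern backwards, and the decisive re-orientation is missing.

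If the prefix sides served their \emph{left} endpoints and the suffix sides their \emph{right} endpoints, the turning side $\{v_i,v_{i+1}\}$ would have to serve both $v_i$ and $v_{i+1}$, which you yourself rule out. The correct pattern, the one in the lemma, is the reverse: $v_j$ is served by $P_{\{v_{j-1},v_j\}}$ for $1\le j\le i$ and by $P_{\{v_j,v_{j+1}\}}$ for $i<j<t$. There are $t-1$ interior vertices and $t$ sides, each serving at most one endpoint, so pigeonhole leaves exactly one side that no interior vertex \emph{needs}. A side in the middle serving neither endpoint is therefore not a contradiction; it is exactly case~3.

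The gap is your step ``the turning side \dots\ must contain a reticulation: all paths are directed inward, so its two end-arcs both enter $P$.'' Not being needed does not mean not serving, because a vertex may have two tree-children.
\begin{itemize}
\item A $1$-side can be oriented $v_i\to x\to v_{i+1}$ while $v_i$ is also served by $P_{\{v_{i-1},v_i\}}$.
\item A $2$-side always serves one endpoint and can be oriented $v_i\to y_1\to y_2\to v_{i+1}$.
\end{itemize}
In both cases the reticulation is the endpoint $v_{i+1}$, so the third bullet fails for the given orientation. The paper fixes exactly this by re-orienting $P_{\{v_i,v_{i+1}\}}$ so that both end-arcs point into it. For a $2$-side, the tree vertex goes next to $v_i$, or next to $v_1$ if $i=0$. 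You use a re-orientation only for the $2$-side clause.

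The right safety argument also differs from yours:
\begin{itemize}
\item Each interior vertex among $v_i,v_{i+1}$ keeps its tree-child on its other stretch side.
\item If $i=0$ (resp.\ $i=t-1$) and the side served $v_0$ (resp.\ $v_t$), case~1 (resp.\ 2) already holds and nothing changes. Otherwise that endpoint loses nothing.
\item The internal tree vertex of a $2$-side keeps its leaf child.
\item The endpoints can only lose in-arcs, so they can only turn from reticulations into tree vertices. This costs no other vertex its tree-child.
\item Afterwards no directed path crosses the side, so no cycle can appear.
\end{itemize}

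Your acyclicity argument via a topological order extending $S$ does not apply here. The rest of the orientation is fixed, and the changed arcs do not lie on $S$-sides.

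Your claim that $v_0$ and $v_t$ always have a tree-child elsewhere is false. An incoming $\ge 3$-side of $S$ gives none, and an $S$-child reached across a $0$-side may be a reticulation. With the argument above, this claim is not needed.
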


\begin{proof}
  As only one condition can be satisfied by $B_r$, we show that at least one is
  satisfied. Fix a stretch $Q = \langle v_0, \dots, v_t \rangle$. The lemma
  holds if this stretch satisfies one of the first two conditions. So assume
  that there are vertices~$v_a$ and~$v_b$ such that $v_a$ has a tree-child
  in~$P_{\{v_a,v_{a-1}\}}$ and $v_b$ has a tree-child in~$P_{\{v_b,v_{b+1}\}}$.
  Choose $a$ to be the maximal such index, and $b$ to be the minimal such
  index. Since $Q$ contains no $\ge 3$-sides and, therefore, each side in $Q$
  can provide a tree-child for at most one generator vertex, $v_i$ has to have a
  tree-child in~$P_{\{v_i,v_{i+1}\}}$ for all~$i \geq b$, and $v_i$ has to have
  a tree-child in~$P_{\{v_i,v_{i-1}\}}$ for all~$i \leq a$. This precludes that
  $a > b$. Thus, $a \le b$. If $a + 1 < b$, then $v_{a+1}$ must have a
  tree-child in~$P_{\{v_a, v_{a+1}\}}$ or $P_{\{v_{a+1}, v_{a+2}\}}$,
  contradicting the choice of $a$ or $b$.  Thus, $a + 1 \le b$.  In this case,
  the index $i = a$ satisfies the first two conditions of the third case, and
  we can orient the edges in $P_{\{v_a, v_{a+1}\}}$ so that an internal vertex
  of $P_{\{v_a, v_{a+1}\}}$ becomes a reticulation, which satisfies the last
  two conditions of the third case.
\end{proof}

By \cref{lem:breaking-point}, there are at most~$t+2$ ways to orient a stretch
with~$t$ sides. Leaf-cycles may not be decomposable into stretches, but are
even easier to orient.

\begin{lemma}
	\label{lem:leaf-cycles}
	If $B$ has a tree-child orientation, then there is a tree-child orientation $B^r$
	such that exactly one of the following holds for each leaf-cycle $C = \langle v_0,
  \dots, v_t, v_0 \rangle$ without~\mbox{$\geq 3$-sides:}
	\begin{itemize}
		\item $v_i$ has a tree-child in~$P_{\{v_i,v_{i+1}\}}$ for all~$i \in [t-1]_0$.
		\item $v_i$ has a tree-child in~$P_{\{v_i,v_{i-1}\}}$ for all~$i \in [t]$.
	\end{itemize}
\end{lemma}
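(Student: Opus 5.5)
Fix a tree-child orientation $B^r$ of $B$ and a leaf-cycle $C = \langle v_0, \dots, v_t, v_0 \rangle$ without $\ge 3$-sides, with indices taken modulo $t+1$. The plan is a counting argument along $C$, in the spirit of the proof of \cref{lem:breaking-point}.

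The first step records two facts. Every vertex of $C$ is a leaf of $S$, so its only sides not in $S$ are its two sides on $C$. Its third side is its unique parent side in $S$, which is oriented towards it, so it has no tree-child there. Hence in $B^r$ every $v_i$ must have a tree-child in $P_{\{v_{i-1},v_i\}}$ or in $P_{\{v_i,v_{i+1}\}}$. Conversely, since no side of $C$ is a $\ge 3$-side, each side of $C$ can provide a tree-child to at most one of its endpoints. (This follows from the case analysis on $k(e)$ before \cref{lem:breaking-point}: $0$-sides provide none, and $1$- and $2$-sides provide at most one.)

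The second step is a pigeonhole count. The cycle has $t+1$ vertices and $t+1$ sides. Each vertex needs at least one side providing it a tree-child, and each side serves at most one vertex. So each side serves exactly one of its endpoints, and each vertex is served by exactly one of its two sides. Define $f(i) = +1$ if $v_i$ is served by $\{v_i,v_{i+1}\}$ and $f(i) = -1$ otherwise. If $f(i)=+1$ and $f(i+1)=-1$, then side $\{v_i,v_{i+1}\}$ serves both endpoints, a contradiction. So $f(i)=+1$ forces $f(i+1)=+1$. Going around the cycle, $f$ is then constant.

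The third step just reads off the result. If $f \equiv +1$ we get the first bullet, and if $f \equiv -1$ we get the second. Both cannot hold simultaneously, since side $\{v_0,v_1\}$ would then serve both $v_0$ and $v_1$; this gives the ``exactly one''.

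Because no modification of $B^r$ is needed, we may simply take $B^r$ to be any tree-child orientation of $B$. If we want the orientation of \cref{lem:breaking-point}, which only reorients stretches, we should check that leaf-cycles sharing no sides with stretches are untouched. Since stretches and leaf-cycles use disjoint sets of sides, this should be immediate.

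The main obstacle is the claim that a leaf of $S$ cannot have a tree-child in its $S$-side. This depends on $S$ being directed away from the blob root, so that each leaf's $S$-side is its incoming tree path. A leaf of $S$ could be the blob root $\rho_B$ itself, but $\rho_B$ has degree $2$ in $\gen{B}$ and so cannot lie on a leaf-cycle. I would make this explicit.
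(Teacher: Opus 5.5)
Your proof is correct and is essentially the paper's argument: both propagate ``$v_i$ gets its tree-child in its forward side'' around the cycle. This uses two facts: a $0$-, $1$- or $2$-side gives a tree-child to at most one endpoint, and each $v_i$, being a leaf of $S$ whose $S$-side is incoming, must get its tree-child in one of its two cycle sides. Your pigeonhole count is not needed for the propagation, but it is harmless. Your remarks that the orientation must be consistent with $S$ and that $\rho_B$ cannot lie on a leaf-cycle make explicit what the paper leaves implicit.
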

\begin{proof}
	Assume w.l.o.g. that~$v_0$ has a tree-child in~$P_{\{v_0,v_1\}}$.
	Since $C$ contains no $\ge 3$-sides, the same inductive argument as in
	the proof of \cref{lem:breaking-point} shows that $v_j$ has a tree-child
	in~$P_{\{v_{j-1}, v_j\}}$ for all $j \in [t]$.
	The second option can be shown analogously.
\end{proof}

\Cref{lem:leaf-cycles} implies also that each leaf-cycle needs at least one 2-side
to ensure acyclicity---an already known fact~\cite{maeda2023orienting}. We can
assume that, for each $2$-side $e$ in a leaf-cycle, the path $P_e = \langle
u,v_1,v_2,w \rangle$ is oriented so that $v_1$ or $v_2$ is a reticulation.

To test whether there exists a tree-child orientation of $B$ consistent with
$S$, we choose the orientation of every $\ge 3$-side $e \notin S$ so that both
endpoints have a tree-child in $P_e$ and guess the orientation of every
$0$-side, the orientations of the edges in every stretch, and the orientations
of the edges in each leaf-cycle without $\ge 3$-sides.  This leaves the
orientation of $1$- and $2$-sides whose endpoints are not in $L(S)$.  We call
these \emph{non-leaf sides}.  We choose their orientations using the following
reduction rule and lemma, but before we do so, we test whether the edge
directions chosen so far create a directed cycle.  If so, the guesses we made
are incorrect and we try the next combination of orientations of $0$-sides,
stretches, and leaf-cycles without $\ge 3$-sides.

\begin{rrule}
	\label{rr:orient-edges-tc}
  Let $e = \{u, w\} \notin S$ be a non-leaf side.  Let $c_u$ and $c_w$ be the
  children of $u$ and $w$, respectively, in $S$.  If $u$ has no tree-child in
  $P_{\{u,c_u\}}$ or $w$ has no tree-child in $P_{\{w,c_w\}}$, assume w.l.o.g.\
  that the former is the case.  Then orient $P_e$ so that $u$'s neighbour in
  $P_e$ is $u$'s tree-child and, if $e$ is a $2$-side, the other internal
  vertex of $P_e$ is a reticulation.  If $e$ is a $1$-side and there is
  a directed path from $w$ to $u$, given the edge directions chosen so far, then
  this creates a directed cycle and there is no tree-child orientation of $G$
  consistent with $S$.
\end{rrule}

\Cref{rr:orient-edges-tc} is applicable as long as there exists a non-leaf side
with an endpoint whose child in $S$ is a reticulation. Once no such side
exists, we orient the remaining sides by making one of the internal vertices of
each of these sides a reticulation.

\begin{lemma}
	\label{lem:reticulating-all-remaining-edges}
  When all leaf-paths, leaf-cycles, and $0$-sides have been oriented and
  \cref{rr:orient-edges-tc} cannot be applied, all remaining sides can be
  oriented so that one of the internal vertices of each such side becomes a
  reticulation.
\end{lemma}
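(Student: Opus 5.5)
The plan is to write down the orientation explicitly and then check three properties directly: the result is a valid rooted binary orientation, it is acyclic, and every vertex keeps a tree-child. Which sides remain unoriented is determined by the hypotheses. Every $\ge 3$-side, $0$-side, stretch, leaf-cycle and leaf-path has already been oriented. So the remaining sides are non-leaf $1$- and $2$-sides $e=\{u,w\}\notin S$. Because \cref{rr:orient-edges-tc} is not applicable, both $u$ and $w$ already have a tree-child in $P_{\{u,c_u\}}$ and $P_{\{w,c_w\}}$, respectively. I will also use the fact, established before the lemma, that the arcs chosen so far contain no directed cycle.

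\textbf{Construction.} For a remaining $1$-side with $P_e=\langle u,v,w\rangle$, orient $uv$ and $wv$, so $v$ becomes a reticulation whose child is its pendant leaf. For a remaining $2$-side with $P_e=\langle u,v_1,v_2,w\rangle$, orient $uv_1$, $v_1v_2$ and $wv_2$. Then $v_2$ is a reticulation whose child is its leaf, and $v_1$ is a tree-vertex whose children are its leaf and $v_2$.

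\textbf{Degrees and parents.} Each endpoint $u$ is a non-leaf vertex of $S$ with at most one child in $S$. Since $u$ has degree~$3$ in $\gen{B}$, this side is its unique non-$S$ side. Hence $u$ has in-arc from its $S$-parent, out-arc to $c_u$, and out-arc into $P_e$, so $u$ is a tree-vertex. For the blob root $\rho_B$ the argument is the same, only with the cut-edge in place of the $S$-parent edge.

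\textbf{Acyclicity.} Every newly created arc enters a vertex in the interior of some $P_e$. All out-arcs of such interior vertices point to pendant leaves or to $v_2$ inside the same path. Hence no new arc can lie on a directed cycle, and the orientation remains acyclic.

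\textbf{Tree-child property.}
\begin{itemize}
\item The only vertices whose out-neighbourhood changes are endpoints $u,w$ and the interior vertices $v_1$.
\item $u$ and $w$ retain their tree-child on their $S$-path, because the new reticulations lie strictly inside non-$S$ paths.
\item Internal vertices $v_1$ keep their pendant leaf as a tree-child.
\item No previously existing vertex is turned into a reticulation, because every new reticulation is a fresh interior vertex of a remaining side.
\item Therefore no tree-child witness established earlier is destroyed.
\end{itemize}

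The step needing most care is this last one. I must justify that the tree-child of $u$ in $P_{\{u,c_u\}}$ cannot be turned into a reticulation by the new arcs. The case to check is when that $S$-side is a $0$-side, so the tree-child is $c_u$ itself. This holds because $c_u$'s remaining non-$S$ side, if any, was already oriented earlier or is oriented now with a reticulation strictly inside it, never at $c_u$.
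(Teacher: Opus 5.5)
Your proposal is correct and follows essentially the same route as the paper. Placing the reticulation strictly inside each remaining side turns both endpoints into tree-vertices. So the child of $u$ in $P_{\{u,c_u\}}$, which by non-applicability of \cref{rr:orient-edges-tc} is either already a tree-vertex or an endpoint of a still-unoriented side, ends up a tree-vertex, and orienting both ends of each side inward cannot close a directed cycle. Your opening claim that $u$ \emph{already} has a tree-child is slightly too strong for that undetermined case, but your final paragraph handles exactly it.
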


\begin{proof}
  Consider any side $\{u,w\}$ that remains unoriented after applying
  \cref{rr:orient-edges-tc} exhaustively.  Let $c_u$ and $c_w$ be the children
  of $u$ and $w$, respectively, in $S$.  Then the child of $u$ in
  $P_{\{u,c_u\}}$ is a tree-vertex or incident to a still unoriented edge.  The
  same is true for the child of $w$ in~$P_{\{w,c_w\}}$.  Placing a reticulation
  on an internal vertex of every still unoriented side ensures that the
  endpoints of each such side become tree-vertices.  Thus, the child of $u$ in
  $P_{\{u,c_u\}}$ and the child of $w$ in $P_{\{w,c_w\}}$ are tree-vertices
  once we are done orienting all edges.  Since the edges of each side not oriented
  after applying \cref{rr:orient-edges-tc} are oriented in opposite directions,
  we do not introduce any cycles.
\end{proof}

Once all edges have been oriented, we verify that the network is tree-child.

To obtain an orientation into a normal network, we apply an additional rule
after \cref{rr:orient-edges-tc} and before applying
\cref{lem:reticulating-all-remaining-edges}: If there exists a non-leaf side $e
= \{u,w\}$ such that the current partial orientation contains a directed path
from $u$ to $w$, then, (1) if~$k(e)=1$, orient the edges in $P_e$ towards $w$
and, (2) if~$k(e)=2$, orient the edges in $P_e$ so that the neighbour of $w$ in
$P_e$ is a reticulation. Both choices prevent shortcuts.

\algtoappendix{Tree-Child Networks}{\AlgTC}{
	\begin{algorithm}[H]
		\caption{Orienting into Tree-Child Networks (\AlgTC)}
		\label{alg:tc}
		\begin{algorithmic}[1]
			\Require A subcubic undirected graph $G$
			\Ensure An orientation of $G$ into a tree-child network, if existent
			\For{each edge $e^*$ as root}
			\State AllBlobsFine $\gets$ true
			\For{each blob}
			\State ThisBlobFine $\gets$ false
			\State Compute the undirected generator $\Ggen$
			\For{each spanning tree $S$ of $\Ggen$}
			\State Let $S^*$ be the directed spanning tree obtained by orienting $S$ according to $e^*$
			\For{each orientation of leaf-cycles without $\geq 3$-sides}
			\For{each orientation of stretches}
			\For{each orientation of internal~0-sides}
			\State Exhaustively apply \Cref{rr:orient-edges-tc}
			\State Reticulate all remaining sides on any middle vertex
			\If{the resulting blob is tree-child}
			\State ThisBlobFine $\gets$ true
			\EndIf
			\EndFor
			\EndFor
			\EndFor
			\EndFor
			\If{\textbf{not} ThisBlobFine}
			\State AllBlobsFine $\gets$ false
			\State \textbf{break}
			\EndIf
			\EndFor
			\If{AllBlobsFine}
			\State \Return the oriented graph
			\EndIf
			\EndFor
			\State \Return ``No possible Tree-Child Orientation''
		\end{algorithmic}
	\end{algorithm}
}

\algtoappendix{Normal Networks}{\AlgNorm}{
	\begin{rrule}
		\label{rr:orient-edges-norm}
		If~$u \to w$ for an internal side~$e=\{u,w\}$, then:
		if~$k(e)=1$, then orient $e$ completely towards $w$, and
		if~$k(e)=2$, then reticulate $e$ on the vertex next to $w$.
	\end{rrule}
	
	\begin{algorithm}[H]
		\caption{Orienting into Normal Networks (\AlgNorm)}
		\label{alg:norm}
		\begin{algorithmic}[1]
			\Require A subcubic undirected graph $G$
			\Ensure An orientation of $G$ into a normal network, if existent
			\For{each edge $e^*$ as root}
			\State AllBlobsFine $\gets$ true
			\For{each blob}
			\State ThisBlobFine $\gets$ false
			\State Compute the undirected generator $\Ggen$
			\For{each spanning tree $S$ of $\Ggen$}
			\State Let $S^*$ be the directed spanning tree obtained by orienting $S$ according to $e^*$
			\For{each orientation of leaf-cycles without $\geq 3$-sides}
			\For{each orientation of stretches}
			\For{each orientation of internal~0-sides}
			\State Exhaustively apply \Cref{rr:orient-edges-tc,rr:orient-edges-norm}
			\State Reticulate all remaining sides on any middle vertex
			\If{the resulting blob is normal}
			\State ThisBlobFine $\gets$ true
			\EndIf
			\EndFor
			\EndFor
			\EndFor
			\EndFor
			\If{\textbf{not} ThisBlobFine}
			\State AllBlobsFine $\gets$ false
			\State \textbf{break}
			\EndIf
			\EndFor
			\If{AllBlobsFine}
			\State \Return the oriented graph
			\EndIf
			\EndFor
			\State \Return ``No possible Normal Orientation''
		\end{algorithmic}
	\end{algorithm}
}

\newcommand{\thmtc}[1]{
\begin{theorem}#1
	\PROB{Tree-Child Orientation} and \PROB{Normal Orientation} can be solved in $\Oh(10.6667^\ell \cdot n^2)$~time.
\end{theorem}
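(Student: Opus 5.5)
The proof has two parts: correctness and running time. The running-time part is the main difficulty, because we need the guesses per spanning tree to cost only a factor $2^\ell$ beyond the $(16/3)^\ell$ spanning trees. Note that $10.6667 \approx 32/3 = 2\cdot 16/3$.

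\textbf{Correctness.} Fix the root placement; blob-determinedness then reduces the problem to a single blob $B$. Assume $B$ has a tree-child orientation. Then it is consistent with some rooted spanning tree $S$ of $\gen{B}$, namely the one whose sides are exactly those whose paths are directed from the root through tree arcs. The steps are as follows.
\begin{itemize}
\item Apply \cref{lem:breaking-point} and \cref{lem:leaf-cycles}. Together they let us assume the orientation of every stretch and every leaf-cycle without $\ge 3$-sides is one of the normalized patterns we enumerate.
\item Each $0$-side has only two orientations, so these are enumerated directly.
\item Every $\ge 3$-side not in $S$ can be given a middle reticulation without loss of generality. This only adds tree-children and reticulation-free paths, and it cannot create a cycle, because its two halves point away from both endpoints.
\item For the remaining non-leaf sides, an exchange argument shows that \cref{rr:orient-edges-tc} is safe. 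If $u$'s child in $S$ is a reticulation, any tree-child orientation must give $u$ its tree-child in $P_e$. For a $2$-side, making the other internal vertex a reticulation only frees $w$.
\item Once the rule no longer applies, \cref{lem:reticulating-all-remaining-edges} finishes the orientation.
\end{itemize}
The final verification step guarantees soundness. For normal networks we argue in addition that the extra rule is forced: if a path $u\leadsto w$ already exists, then orienting the side from $w$ toward $u$, or placing the reticulation next to $u$, would create a shortcut or a cycle.

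\textbf{Running time.} Fix $S$. The sides not in $S$ number $(3\ell-3)-(2\ell-3)=\ell$. The guessed objects are $0$-sides, stretches, and leaf-cycles, and they use disjoint sets of these non-tree sides. So it suffices to bound the number of options per object by $2^{(\text{number of sides it uses})}$.
\begin{itemize}
\item A $0$-side has $2 = 2^1$ options.
\item A leaf-cycle has $2 \le 2^{t}$ options.
\item A stretch with $t\ge 2$ sides has at most $t+2 \le 2^t$ options, by \cref{lem:breaking-point}.
\end{itemize}
The product of all options is therefore at most $2^\ell$. Each combination is processed in polynomial time in the size of $B$. This covers applying \cref{rr:orient-edges-tc} exhaustively (with cycle checks via reachability on the $\Oh(\ell)$-size generator), extending the orientation, and verifying the tree-child or normal property.

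The bound that needs care is that the per-combination work sums to $\Oh(|B|)$ per blob, or to $\Oh(n)$ overall per root choice. The plan is to do all reasoning on $\gen{B}$ in $\mathrm{poly}(\ell)$ time, expand to $B$ only once at the end in $\Oh(|B|)$ time, and absorb the $\mathrm{poly}(\ell)$ factor by rounding $32/3\approx 10.6667$ up. Summing over blobs and over the $\Oh(n)$ root placements then gives $\Oh((16/3)^\ell\cdot 2^\ell\cdot n^2)=\Oh(10.6667^\ell\cdot n^2)$.
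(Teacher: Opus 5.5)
Your proposal is correct and follows the paper's proof essentially step for step. Correctness goes through \cref{lem:breaking-point,lem:leaf-cycles}, \cref{rr:orient-edges-tc} and \cref{lem:reticulating-all-remaining-edges}, and the running time is $(16/3)^\ell$ spanning trees, times at most $2^\ell$ guesses per tree, times $\Oh(n)$ root choices; your per-object bound $t+2\le 2^t$ over the $\ell$ disjoint non-tree sides is the same count as the paper's $\ell_1+2\ell_2+2\ell_3\le\ell$, and absorbing the $\mathrm{poly}(\ell)$ per-guess cost into the rounding of $32/3$ up to $10.6667$ is, if anything, more careful than the paper's $\Oh(\ell)$ claim (the only thing to tighten is that $S$ should be read off a base tree, with one parent side per generator-vertex reticulation, rather than from ``tree arcs'', since otherwise reticulations at generator vertices would leave $S$ non-spanning).
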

}
\thmtc{[\appendixstar]\label{thm:tc-in-paper}}
\newcommand{\thmtcrt}{
	\proofpara{Running time} By \cref{obs:spanning}, iterating over the spanning
	trees of each blob $\Oh(16/3^\ell)$ time.  We do this once for each of the
	$\Oh{n}$ possible root choices.  Given a rooted spanning tree of a blob, we
	guess the orientations of $0$-sides, spans, and leaf-cycles without $\ge
	3$-sides.  As we observed before, the worst-case cost for guessing the
	orientations of spans is when each span has length~$2$. Thus, we can assume
	that this is the case. Let $\ell_1$ be the number of $0$-sides, let $\ell_2$
	be the number of spans, and let $\ell_3$ be the number of leaf cycles without
	$\ge 3$-edges. Then the cost of guessing their directions is $\Oh(2^{\ell_1}
	\cdot 4^{\ell_2} \cdot 2^{\ell_3})$. Since $\ell_1 + 2\ell_2 + 2\ell_3 \le
	\ell$, this is bounded by $\Oh(2^\ell)$. For each guess, applying the
	reduction rules and verifying acyclicity, the tree-child property and, for
	normal networks, the absence of shortucts takes $\Oh(\ell)$ time per blob,
	$\Oh(n)$ time in total. Thus, the running time is $\Oh(n \cdot 16/3^\ell
	\cdot 2^\ell \cdot n) = \Oh(32/3^\ell \cdot n^2) = \Oh(10.6667^\ell \cdot
	n^2)$.
}
\newcommand{\thmtcproof}[1]{
\begin{proof}
	\looseness=-1
	\proofpara{Correctness}
	By \cref{lem:breaking-point,lem:leaf-cycles}, our algorithm iterates over all
	relevant orientations of $0$-sides, stretches, and leaf-cycles without $\ge
	3$-edges. For the non-leaf sides, \cref{rr:orient-edges-tc} is correct
	because it turns an endpoint $w$ of such a side $\{u,w\}$ into a reticulation
	only if this is necessary to ensure that $u$ has a tree-child.
	\Cref{lem:reticulating-all-remaining-edges} shows that the orientations we
	choose for all edges whose directions are not forced by our guesses and
	\cref{rr:orient-edges-tc} are correct.  Thus, if there exists an orientation
	of $B$ consistent with $S$, our algorithm finds it.  Since this holds for
	every blob $B$ of $\U$, this shows that we find a valid orientation of $\U$
	if one exists.
	#1
\end{proof}
}
\thmtcproof{}
\thmtoappendix{thm:tc-in-paper}{\thmtc{}}{
\thmtcproof{
	
	\thmtcrt}
} 

\section{Finding Orientations by Guessing the Reticulations}

\label{sec:reti-guessing}

For the network classes in this section, the spanning tree approach yields
running times of~$\Oh^*(16^\ell)$ because it seems difficult to avoid guessing
for every $1$-side $\{u,w\}$ not in $S$ whether there is a reticulation at $u$,
at $w$ or on the side. In this section, we show that faster running times are
achievable for these network classes by guessing the placement of reticulations
instead of a spanning tree.  Specifically, after guessing the global root
placement as before, we aim to find an orientation for each blob $B$.  To do
so, we guess which of the vertices of~$\gen{B}$ are reticulations and which
sides of $\gen{B}$ have reticulations on them.  Then we show that, given this
guess of reticulations, it takes polynomial time to decide whether there exists
a valid orientation of $B$ consistent with this guess.

\begin{lemma}
	\label{lem:num-equivalence-classes}
  For an undirected generator $\gen{B}$ of a level-$\ell$ blob, there are
  $\Oh(12.2071^\ell)$ possible choices for placing reticulations.
\end{lemma}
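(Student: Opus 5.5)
The plan is to count a placement of reticulations as a pair $(R,F)$ with $R \subseteq V(\gen{B})$ and $F \subseteq E(\gen{B})$, where $R$ is the set of generator vertices that are reticulations and $F$ is the set of sides whose path $P_e$ contains a reticulation among its internal vertices. The proof then bounds the number of pairs with $|R| + |F| = \ell$ by a single binomial coefficient and estimates it with Stirling's formula.

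First I would show that a guess really has this form and this size. The rooted blob $B$ is a rooted binary network, so it has exactly $|E(B)| - |V(B)| + 1 = \ell$ reticulations. Each of them is either a vertex of $\gen{B}$ or an internal vertex of some $P_e$. So the total number of reticulations, $\ell$, is at least $|R| + |F|$.

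Next I would show that $|R| + |F| = \ell$ exactly, i.e.\ that no path $P_e$ contains two reticulations. Take an internal vertex $v_i$ of $P_e$. Its pendant leaf $x_i$ must be its child. Hence either $v_i$ has one incoming and one outgoing path edge, or it is a reticulation with both path edges incoming. Now suppose $v_i$ and $v_j$ with $i<j$ are both reticulations. Then the arc between $v_i$ and $v_{i+1}$ points into $v_i$, so $v_{i+1}$ has an outgoing path edge. Since $v_{i+1}$ already has $x_{i+1}$ as a child, it cannot have two outgoing path edges, so its other path edge must be incoming, i.e.\ it comes from $v_{i+2}$. Inducting along the path, $v_{j-1}$ receives its incoming arc from $v_j$. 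This contradicts $v_j$ being a reticulation, whose path edges both point into it. So each side carries at most one reticulation, and the guess is fully described by $(R,F)$. The exact position of the reticulation inside $P_e$ is left undetermined and is resolved later in polynomial time.

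Then comes the counting. By \cref{lem:size-gen}, $\gen{B}$ has $2\ell-2$ vertices and $3\ell-3$ sides. Adding the blob root $\rho_B$, which is not suppressed, changes this by at most an additive constant. Hence, by Vandermonde's identity, the number of choices is
\[
  \sum_{k} \binom{2\ell}{k}\binom{3\ell}{\ell-k} = \binom{5\ell}{\ell}.
\]
Stirling's formula gives $\binom{5\ell}{\ell} = \Oh\bigl((5^5/4^4)^\ell\bigr) = \Oh\bigl((3125/256)^\ell\bigr)$, and $3125/256 \approx 12.2070$, which matches the claimed bound $\Oh(12.2071^\ell)$. Any constant shift in the number of vertices or sides only costs a polynomial factor, which is absorbed by rounding the base up.

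The main obstacle is the structural step showing that a side carries at most one reticulation. Without it, sides would need multiplicities, and the count would become a multiset coefficient with a larger base. The rest is routine: the identity $|R|+|F|=\ell$ together with Vandermonde and the Stirling estimate.
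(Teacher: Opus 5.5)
Your proposal is correct and follows the paper's proof: both treat the roughly $5\ell$ generator vertices and sides as candidate locations, choose exactly $\ell$ of them, and bound $\binom{5\ell}{\ell} \le (5^5/4^4)^\ell = (3125/256)^\ell$ with the entropy (Stirling) estimate. Your argument that each path $P_e$ carries at most one reticulation, and your remark about the unsuppressed blob root, spell out details that the paper's short proof leaves implicit.
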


\begin{proof}
  By \cref{lem:size-gen}, $\gen{B}$ has $2\ell - 2$ vertices and $3\ell - 3$
  edges, yielding $5\ell - 5$ reticulation candidates.  Since we need to place
  $\ell$ reticulations, we thus have $\binom{5\ell-5}{\ell}$ choices to do so.
  Using the entropy bound $\binom{5\ell}{\ell} \leq (5^5 / 4^4)^\ell = (3125 /
  256)^\ell$, this  yields $\Oh(12.2071^\ell)$ choices.
\end{proof}

\looseness=-1
Let $V_r \subseteq V(\gen{B})$ be the subset of generator vertices we guessed
are reticulations, and let $E_r \subseteq E(\gen{B})$ be the subset of sides that
we guessed have reticulations on them. There exists at most one orientation of
the subgraph of $\gen{B}$ containing all sides of $\gen{B}$ not in $E_r$ in
which exactly the vertices in $V_r$ are reticulations, and this orientation can
be found in~$\Oh(|\gen{B}|)$~time~\cite{huber2024orienting} if it exists. The
orientation of each side $e \notin E_r$ determines the orientation of the edges
in the corresponding path $P_e$ in $B$. Thus, we only need to find the
orientations of the edges in paths $P_e$ for $e \in E_r$. For all of the
network classes we discuss here, it is always safe to place the reticulation on
a $\ge 3$-side $e \in E_r$ at any vertex of $P_e$ that is not a neighbour of
either endpoint of $P_e$.  $0$-sides cannot have any reticulations on them.  If
a $1$-side $e$ has a reticulation on it, then this determines the orientation
of all edges in $P_e$. Thus, it only remains to choose the orientations of
$2$-sides in $E_r$.  We call these sides \emph{undetermined}.

\subsection{Temporal Networks}

\label{sec:temp}

\begin{definition}
  \looseness=-1
  A rooted phylogenetic network $\Net$ is \emph{temporal} if it has a labelling $\lambda: V(\Net) \to \mathbb{Q}$ such that
	\begin{enumerate}[(1)]
		\item each arc~$uv$ satisfies~$\lambda(u) \leq \lambda(v)$,
		\item each reticulation arc~$uv$ (i.e., $v$ is a reticulation) satisfies~$\lambda(u) = \lambda(v)$, and
		\item each vertex $v \in V(\Net)$ has an arc~$vw$ such that~$\lambda(v) < \lambda(w)$.
	\end{enumerate}
\end{definition}

\looseness=-1
Vertex $w$ in Condition (3) is a tree-child of~$v$, so temporal
networks are tree-child networks.

Since temporal networks are blob-determined, we can once again focus on
orienting a single blob $B$, given a fixed choice of its root and a guess of
the placement of reticulations in~$\gen{B}$.  After orienting all but the
undetermined sides of $\gen{B}$ as described at the beginning of this section,
we first test whether this partial orientation can be extended to a tree-child
orientation.  If this is not the case, then the guess of reticulations or of
the root was not the right one.  If we conclude that a tree-child orientation
is possible, then, in the process, we fix the orientations of all undetermined
sides that matter to obtain a tree-child orientation.  Any orientation of the
remaining sides produce a tree-child orientation, but we need to produce an
orientation that has a temporal labelling $\lambda$.  We reduce this problem to
an instance of \TTsOshort, the problem already used to find an
orchard-orientation in \cref{sec:orchards}.

First note that every generator vertex has at most two incident undetermined
sides. Otherwise, it is not reachable from the root, as all three sides
incident to it contain reticulations, so the guess of reticulations was
incorrect.  We start by partially orienting each undetermined side $e = \{u,
v\}$ by orienting the edges in $P_e$ incident with $u$ and $v$ towards the
internal vertices of $P_e$, as this is consistent with having a reticulation in
$P_e$. Now we apply the following reduction rule exhaustively.

\begin{rrule}
  \label{rr:force-edges}
  If there exists a generator vertex $u$ incident with exactly one undetermined
  side $e = \{u,v\}$ and without a tree-child along any of its other incident sides,
  then orient the edges in $P_e$ such that the internal vertex of $P_e$ next to
  $u$ is a tree-vertex, and the other internal vertex is a reticulation.
\end{rrule}

After applying this rule exhaustively, we check whether the orientation is
acyclic and whether every vertex now has a tree-child.  If not, then there is
no tree-child orientation consistent with the guessed placement of
reticulations and, thus, also no temporal orientation. To orient the remaining
sides to obtain a temporal orientation, or decide that none exists, we
construct the following \TTsOshort instance:

The set~$A$ contains an arc~$uw$ for each side~$e=\{u,w\}$ of $\gen{B}$ without
a reticulation that is directed from~$u$ to~$w$ unless $e$ is a $0$-side and
$w$ is a reticulation. The set~$E_=$ contains all $1$-sides with a reticulation
and all $0$-sides incident with a reticulation. The set~$E_{\neq}$ contains all
undetermined sides as well as~$\{u,w\}$ if~$uw \in A$.
The set $V_2$ contains all generator vertices incident with a~$\geq 3$-side.
All remaining generator vertices are in~$V_1$.

As we prove below, there is a temporal orientation of $B$ consistent with the
guessed placement of reticulations if and only if this \TTsOshort instance has a golden
labelling $\lambda$.  In this case, we extend $\lambda$ to a
temporal labelling of $B$ as follows:
For each side~$e = \{u,w\}$ with corresponding path~$P_e = \langle u, v_1,
\dots, v_{k(e)}, w \rangle$ in $B$, assume w.l.o.g.\ that~$\lambda(u) \leq
\lambda(w)$. If $e$ does not contain a reticulation, then set~$\lambda(v_i)
= \lambda(u) + \frac{i}{k(e)+1}$, for all $i \in [k(e) - 1]$;
if $w$ is not a reticulation, then set 
$\lambda(v_{k(e)}) = \lambda(u) + \frac{k(e)}{k(e)+1}$;
if~$w$ is a reticulation, then set~$\lambda(v_{k(e)}) = \lambda(w)$.
If $e$ is a 1-side that contains a reticulation, then set~$\lambda(v_1) = \lambda(u)$.
If $e$ is a 2-side that contains a reticulation, then set~$\lambda(v_1) = \lambda(v_2) = \lambda(w)$.
If $e$ is a $\geq 3$-side that contains a reticulation, then set~$\lambda(v_i) = \lambda(w) + \min(i,k(e)-2)$.
For any leaf $x$ with parent $p$, set $\lambda(x) = \lambda(p) + 1$.
This labelling defines an orientation by orienting an edge~$\{u,v\}$
towards~$v$ whenever~$\lambda(u) < \lambda(v)$ and orienting~$\{u,v\}, \{v,w\}$
towards~$v$ whenever~$\lambda(u) = \lambda(v) = \lambda(w)$.

\algtoappendix{Temporal Networks}{\AlgTemp}{
	\begin{algorithm}[H]
		\caption{Orienting into Temporal Networks (\AlgTemp)}
		\label{alg:temp}
		\begin{algorithmic}[1]
			\Require A subcubic undirected graph $G$
			\Ensure An orientation of $G$ into a temporal network, if existent
			\For{each blob}
			\State Compute the undirected generator $\Ggen$
			\For{each set of reticulations $R$ and each root edge $e^*$}
			\If{the class $(\Ggen, R, e^*)$ cannot be oriented into a phylogenetic network}
			\State \textbf{continue}
			\EndIf
			\State Construct the $\Instance = (V_1 \uplus V_2, E_= \uplus E_{\neq}, A)$ instance of \TTsO
			\If{$\Instance$ has a golden labelling $\lambda$}
			\State Extend $\lambda$ to all vertices
			\State \Return the oriented network
			\EndIf
			\EndFor
			\EndFor
			\State \Return ``No possible Temporal Orientation''
		\end{algorithmic}
	\end{algorithm}
}

\newcommand{\thmtemp}[1]{
\begin{theorem}#1
	\PROB{Temporal Orientation} can be solved in
	$\Oh(12.2071^\ell \cdot n^2)$ time.
\end{theorem}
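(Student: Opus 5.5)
The proof will follow the pattern of \cref{thm:orchards,thm:tc-in-paper}: a correctness part with two directions and a running-time part. Temporal networks are blob-determined, and guessing the root (among $\Oh(n)$ edges) fixes the root of every blob. So it suffices to show, for a fixed blob $B$ with fixed root, the following: $B$ has a temporal orientation if and only if some placement of reticulations counted in \cref{lem:num-equivalence-classes} survives the test. The test consists of the unique forced orientation of sides not in $E_r$, exhaustive application of \cref{rr:force-edges} followed by the tree-child and acyclicity check, and the constructed \TTsOshort instance having a golden labelling.

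\textbf{Soundness.} I would take a golden labelling $\lambda$ of the instance, extended as described, and check the three temporal conditions side by side for each $0$-, $1$-, $2$- and $\ge 3$-side. Arcs in $A$ give $\lambda(u)<\lambda(w)$, which makes the interpolated labels on tree sides strictly increasing. Sides in $E_=$ give equal labels at the endpoints of reticulation arcs. The leaf labels handle condition (3) for every internal path vertex. For generator vertices, condition (4) of \TTsOshort gives a neighbour $w$ with larger label via an $E_{\ne}$ side, which is either a tree side in $A$ or an undetermined $2$-side. In the latter case the orientation induced by $\lambda$ puts the reticulation at the end with larger label, which gives the lower endpoint a tree-child. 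Vertices in $V_2$ get a tree-child from their $\ge 3$-side. Finally, equality of labels only along reticulation arcs, together with strict increase elsewhere, gives acyclicity and a valid rooted network.

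\textbf{Completeness.} Given a temporal orientation with labelling $\lambda^*$, take its reticulation placement, which is one of the enumerated guesses. The orientation of non-$E_r$ sides is then the forced one. \Cref{rr:force-edges} only fixes sides whose orientation is necessary for the tree-child property, and temporal networks are tree-child, so the orientation after the rule is still consistent with the temporal one. I would then verify that $\lambda^*$ restricted to $V(\gen{B})$ is golden:
\begin{itemize}
\item $0$-sides into a reticulation and reticulated $1$-sides have equal endpoint labels, because reticulation arcs are flat and the tree arc on a $1$-side ends at a leaf-parent.
\item Tree sides give strict increase, since a tree vertex has a strictly larger child label along the path, or a $0$-side tree arc goes into a tree vertex whose label must exceed its parent's. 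I expect this to need care, because condition (1) only gives $\le$ on tree arcs. I would argue that a tree arc $uw$ with $\lambda^*(u)=\lambda^*(w)$ into a non-reticulation $w$ is impossible: it would have to hold only for reticulation arcs. Temporal networks allow it in principle, so I may need to first normalize $\lambda^*$, e.g.\ by perturbing labels upward along tree arcs, to make all tree arcs strict.
\item An undetermined $2$-side $u,v_1,v_2,w$ with reticulation $v_2$ has $\lambda(u)<\lambda(v_1)$ and $\lambda(v_1)=\lambda(v_2)=\lambda(w)$, so its endpoints differ.
\item Condition (4) holds for $V_1$-vertices because their tree-child lies on a $1$- or $2$-side or a tree side, each of which yields a strictly larger generator neighbour.
\end{itemize}
The main obstacle is this normalization plus the case analysis showing every edge relation of the instance is respected. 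With it in place, \cref{lem:solve-TTsO} finds some golden labelling.

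\textbf{Running time.} There are $\Oh(n)$ root choices. For each root, summing over blobs, there are $\Oh(12.2071^\ell)$ reticulation guesses by \cref{lem:num-equivalence-classes}. Each guess takes $\Oh(\ell)$ time for the forced orientation~\cite{huber2024orienting}, the rule application, the checks, and the \TTsOshort solve (\cref{lem:solve-TTsO}), plus linear time for the extension. This gives $\Oh(12.2071^\ell\cdot n)$ per root and $\Oh(12.2071^\ell\cdot n^2)$ in total.
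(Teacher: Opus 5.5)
Your plan is the paper's proof. It makes the same guesses (the root, then the placement of reticulations in each blob) and runs the same pipeline: forced orientation, the forcing reduction rule, the \TTsOshort instance, and the extension of the golden labelling. It also has the same two correctness directions and the same $\Oh(n)\cdot\Oh(12.2071^\ell)\cdot\Oh(|B|)$ accounting.

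The weak point is the one you flag yourself, and you leave it open. With the paper's definition, condition~(1) only gives $\lambda(u)\le\lambda(v)$ on tree arcs. So the restriction of a temporal labelling can violate the strict constraints of $A$, and also the $E_{\ne}$ constraint of an undetermined side when $\lambda(u)=\lambda(v_1)$. The paper's ``easy to verify'' passes over this. ``Perturbing labels upward'' is not yet an argument, because raising one label drags along its whole class under the equivalence generated by reticulation arcs.

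The missing observation is the following. Let $xy$ be a tree arc with $\lambda(x)=\lambda(y)$.
\begin{itemize}
\item $x$ is not a reticulation, since by (3) its single out-arc would have to be strict.
\item By (3), the other child $z$ of $x$ has $\lambda(z)>\lambda(x)$, so $xz$ is not a reticulation arc.
\item The in-arc of $x$, if any, enters a non-reticulation and is therefore a tree arc.
\end{itemize}
Hence $x$ is incident with no reticulation arc, and its class is $\{x\}$. Now contract the classes and keep the tree arcs. Labels are constant on classes and weakly increase along tree arcs. So every arc on a directed cycle (or loop) of this quotient is an equal-label tree arc whose tail class is a singleton, and the cycle is a directed cycle of the network, which is impossible. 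Labelling the classes by a topological order of the quotient keeps reticulation arcs flat and makes every tree arc strict. It also preserves (3), since the strictly larger child in (3) is reached by a tree arc. After this normalization your completeness bullets go through.

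There is also a false step in your soundness part. The claim ``vertices in $V_2$ get a tree-child from their $\ge 3$-side'' fails when that side carries no reticulation and is directed \emph{into} the vertex $w$. Then $w$'s tree-child must come from another side, possibly an undetermined one. Since $w\in V_2$, condition~(4) does not force $\lambda(w)$ below the other endpoint of that side. The extension places the reticulation next to the endpoint with the larger label, so $w$ may end up with no tree-child.

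The paper's construction is equally loose here. Repair it by putting into $V_2$ only the vertices incident with a $\ge 3$-side that carries a reticulation. A reticulation-free $\ge 3$-side leaving $v$ lies in $A$ and already gives $v$ an $E_{\ne}$-neighbour with a larger label. So your $V_1$ arguments, in both directions, only gain this extra case, handled as a tree side in $A$. Your reading of $E_=$ as containing only the $0$-sides directed \emph{into} a reticulation is the necessary one, since a $0$-side leaving a reticulation is in $A$.
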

}
\thmtemp{[\appendixstar]\label{thm:temp}}
\newcommand{\thmtemprt}{
	\proofpara{Running time} We consider $\Oh(n)$ possible placements of the
	root. Then we consider $\Oh(12.2071^\ell)$ possible placements of
	reticulations for each blob, independently for each blob. Per reticulation
	placement in each blob $B$, we check whether there exists an orientation of
	$\Ggen$ consistent with this guess using the algorithm
	of~\cite{huber2024orienting}, apply \cref{rr:force-edges}, and solve an
	\TTsOshort instance, all of which takes $\Oh(|B|)$ time. Summing over all
	blobs, this gives a total cost of $\Oh(12.2071^\ell \cdot n^2)$.
}
\newcommand{\thmtempproof}[1]{
\begin{proof}
	\proofpara{Correctness}
  It is is easy to verify that, if we find a golden labelling, then the final
  orientation we obtain is temporal, with the labelling $\lambda$
  to prove it.  In particular, we explicitly ensure that the endpoints of
  reticulation arcs have the same label and that every vertex $v$ has an
  out-neighbour $w$ with $\lambda(w) > \lambda(v)$, making this out-neighbour a
  tree-child of $v$.

\looseness=-1
  For the converse direction, assume that $\U$ has a temporal orientation and a
  corresponding temporal labelling $\lambda$.  Then this orientation is
  consistent with some guess of a root and with some choice of reticulations
  within each blob.  Since this choice of reticulations allows only one
  orientation of the sides without reticulations on them, the temporal
  orientation agrees with the orientations of these sides computed by the
  algorithm from \cite{huber2024orienting}.  It must also agree with the
  orientations chosen by \cref{rr:force-edges}, as these orientations are
  necessary even to obtain a tree-child orientation.  As the endpoints of
  reticulation arcs must be assigned the same label by $\lambda$ and every
  vertex must have a a neighbour with a greater label, it is now easy to verify
  that the restriction of $\lambda$ to the generator vertices is a golden
  labelling of the \TTsOshort instance we construct.  Thus, we find a golden
  labelling and, from it, obtain a temporal orientation.
	#1
\end{proof}
}
\thmtempproof{}
\thmtoappendix{thm:temp}{\thmtemp{}}{
\thmtempproof{
	
	\thmtemprt}
} 

\subsection{Reticulation-Visible Networks}

\label{sec:rv}

\begin{definition}
  \looseness=-1
  A phylogenetic network is \emph{reticulation-visible} if, for each
  reticulation~$r$, there is a leaf~$x$ such that every path from the root
  to~$x$ contains~$r$.  We say $r$ is \emph{visible} from $x$.
\end{definition}

Again, we focus on finding a reticulation-visible orientation of each blob $B$
consistent with a root guess and with a guessed placement of reticulations on
generator vertices and sides. Reticulations on sides are trivially visible from
their adjacent leaves. We reduce visibility of reticulations that are generator
vertices to \MatchBG.

Let~$V_r \subseteq V(\gen{B})$ be reticulations that are generator vertices. We
remove from $V_r$ all those reticulations that, based on the edge directions
chosen so far, are visible from a leaf without passing through another
reticulation or are visible from another reticulation.  The former are clearly
visible, and we have the following observation:

\begin{observation}
  If a reticulation~$r$ is visible from a reticulation~$r'$, then $r$
  is visible if $r'$ is visible.
\end{observation}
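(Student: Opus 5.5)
The plan is to unfold the definition of visibility and compose two "every path passes through" statements. Recall that $r$ is \emph{visible from} $r'$ means that every path from the root to $r'$ contains $r$; this is the same notion as visibility from a leaf, with $r'$ in place of the leaf. Assume $r'$ is visible, so there is a leaf $x$ such that every root-to-$x$ path contains $r'$. I will show that every root-to-$x$ path also contains $r$, which makes $r$ visible from $x$.

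Fix an arbitrary directed path $P$ from the root $\rho$ to $x$. By visibility of $r'$ from $x$, $P$ passes through $r'$. The prefix $P[\rho, r']$ of $P$ ending at $r'$ is therefore a directed path from the root to $r'$. Since $r$ is visible from $r'$, this prefix contains $r$, and hence so does $P$. As $P$ was arbitrary, every root-to-$x$ path contains $r$.

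The only point requiring care is that the prefix of a directed path in a DAG is again a directed path from the root, so the hypothesis about $r'$ applies to it. I do not expect any other obstacle.

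In the algorithm this transitivity is used as follows. If the orientation chosen so far already makes $r$ visible from $r'$, then $r$ may be dropped from $V_r$: it suffices to certify visibility of $r'$, whether directly through a leaf or through a further chain of reticulations, each of which transfers visibility by the same argument.
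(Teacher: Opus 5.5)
Your argument is correct and is the intended one: the paper states this observation without proof, and your step of taking the prefix up to $r'$ of an arbitrary root-to-$x$ path and applying visibility of $r$ from $r'$ is exactly the transitivity it relies on. Your reading of ``visible from $r'$'' as ``every root-to-$r'$ path contains $r$'' also matches how the notion is used in the BFS filtering step.
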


Let $B'$ be the graph obtained from $B$ by replacing every undetermined side
with a $1$-side with a reticulation on it. We identify the reticulations that
can be removed from $V_r$ by running directed BFS in $B'$ from each vertex $r
\in V_r$, halting along any branch upon encountering a leaf or a reticulation.
If a leaf is found, $r$ has a path to a leaf that does not pass through any
reticulation. Thus, $r$ is visible. If a reticulation $r'$ is encountered along
two distinct branches, $r$ is visible from $r'$ and is likewise removed.  By
the following lemma, this procedure correctly identifies all reticulations that
can be removed.

\begin{lemma}
	\label{lem:filter-correct}
  If a reticulation~$r$ is visible from another reticulation, then the BFS
  from $r$ finds a leaf or finds a reticulation twice.
\end{lemma}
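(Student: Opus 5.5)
We argue in the directed graph $B'$, where every reticulation already has its final status and every vertex reached by the BFS that is not a reticulation or a leaf is a tree-vertex. Such a tree-vertex therefore has exactly one in-arc and two out-arcs.

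Suppose $r$ is visible from a reticulation $r' \neq r$, meaning every root-to-$r'$ path passes through $r$. We may assume the BFS from $r$ finds no leaf, since otherwise we are done. The plan is to exhibit two distinct BFS branches that both halt at the same reticulation.

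First, we reduce to a nearest reticulation below $r$. Because $r'$ is reachable from the root, it is reachable from $r$. Take a directed path from $r$ to $r'$ and let $r''$ be the first reticulation on it after $r$. We claim $r$ is visible from $r''$ as well. Indeed, $r''$ has two parents, so some root-to-$r''$ path avoiding $r$ would extend along the remainder of the path to $r'$. Hmm — that extension could still pass through $r$ again, but this is impossible in a DAG, because $r''$ is a descendant of $r$. So visibility descends along the path, and it suffices to treat the case where $r'$ is reached from $r$ by a path whose internal vertices are all tree-vertices. That path is precisely a BFS branch from $r$ that halts at $r'$.

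Second, we show $r'$ is reached along two distinct branches. The reticulation $r'$ has two in-arcs, $p_1r'$ and $p_2r'$. Both parents $p_1$ and $p_2$ must be descendants of $r$: otherwise a root-to-$p_i$ path avoiding $r$ exists (a vertex all of whose root paths pass through $r$ is a descendant of $r$), and it extends to a root-to-$r'$ path avoiding $r$, a contradiction. Moreover, no reticulation other than $r$ can lie on every path from $r$ to $p_i$; I would argue this by contradiction using the choice of $r''$ as the first reticulation.

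The main obstacle is exactly this step. We must show that both $p_1$ and $p_2$ are reached from $r$ through tree-vertices only, so that the BFS, which halts at reticulations, actually traverses both in-arcs. I would first try the following. Take a path from $r$ to $p_2$ and suppose it hits a reticulation $q$ before reaching $p_2$. Then $q$ has a parent outside the relevant subtree, and that parent yields a root path to $p_2$, and hence to $r'$, avoiding $r$ — unless $q$ is itself dominated by $r$. In the dominated case we replace $r'$ by $q$ and induct on the distance from $r$ in topological order; the induction is well-founded because $B'$ is acyclic. Once both in-arcs of the chosen $r'$ are reached via tree-vertex paths, these two paths form distinct branches, since they end in different arcs. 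So the BFS finds $r'$ twice, as claimed.
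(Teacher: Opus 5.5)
Your argument is correct as it stands in your last paragraph, and it is a genuinely different route from the paper's. Two things in the middle should be cleaned up first.

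The claim in your second paragraph is false as stated: that no reticulation other than $r$ lies on every path from $r$ to $p_i$, because $r''$ was the first reticulation. Choosing $r''$ as the first reticulation on one $r$--$r'$ path only controls the parent lying on that path. The other parent can be a tree-vertex whose parent is a reticulation $q$, where both parents of $q$ hang below $r$ via tree-vertices. Then $q$ lies on every path from $r$ to that parent, and the BFS never traverses that in-arc of $r''$. Even if the claim were true, it would not give a reticulation-free path.

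What actually carries the proof is your descent. Any reticulation $q\neq r$ on an $r$--$p_i$ path is automatically dominated by $r$, since a root path to $q$ avoiding $r$ would continue along the path to $p_i$ and the arc $p_ir'$. This includes $q=p_i$, a case your phrase ``before reaching $p_2$'' omits. You then pass to $q$, which is strictly earlier in topological order.

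Equivalently, take a topologically earliest reticulation $w\neq r$ dominated by $r$. Both parents of $w$ are dominated. Neither is the root, and neither is a reticulation other than $r$. So following unique parents upward from each stays among dominated tree-vertices until it reaches $r$, and the BFS traverses both in-arcs of $w$. With this, your reduction to $r''$ and your assumption that no leaf is found are both redundant.

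The paper instead argues the contrapositive. It assumes the BFS finds no leaf and every reticulation at most once. Then every reticulation where a branch halts has a parent outside the explored tree, hence a root path avoiding $r$. This propagates to everything below, so $r$ dominates no other reticulation.

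Your argument runs the same topological-order induction in the opposite direction, climbing from a dominated reticulation to a minimal one. This buys two things. First, it makes explicit the induction that the paper compresses into ``Thus, for each of these reticulations there is a path from the root avoiding $r$''; that outside parent may itself lie below $r$ behind a further reticulation. Second, it shows the leaf alternative in the statement is superfluous, because the BFS always meets some reticulation twice. The paper's version is shorter and is phrased directly in terms of the filter's stopping condition.
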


\begin{proof}
  Assume for the sake of contradiction that the BFS does not discover any
  leaves and discovers every reticulation at most once. Then each branch of the
  BFS ends at a unique reticulation. Thus, for each of these reticulations
  there is a path from the root avoiding~$r$. This directly implies that
  every vertex below these reticulations has a path from the root avoiding~$r$
  as well, and $r$ cannot be visible from any other reticulation.
\end{proof}

After this filtering process, $V_r$ contains those reticulations that are not yet
visible from a leaf nor from another reticulation.  Thus, we need to orient
undetermined sides to make these reticulations visible from leaves attached to
these sides. We reduce this to an instance of \MatchBG.

\begin{construction}
	\label{cons:RV-to-Matching}
  Let~$U$ be the set of undetermined sides. We construct a bipartite graph~$H$
  with vertex bipartitions~$V_r$ and~$U$, where $r \in V_r$ is adjacent to $\{u,v\}
  \in U$ if and only if $r$ can reach $u$ or $v$ without passing through
  another reticulation.
\end{construction}

A matching $M$ of size $|V_r|$ in $H$ yields a valid orientation as follows: If
$(r,\{u,v\}) \in M$, then without loss of generality, $r$ has a path to $u$
that does not pass through another reticulation.  We choose the reticulation in
$P_{\{u,v\}}$ to be the neighbour of $v$, which ensures that $r$ is visible
from the leaf attached to the neighbour of $u$ in $P_{\{u,v\}}$.  For the
undetermined sides not matched by $M$, we place the reticulation on this side
arbitrarily.  As we show next, if there is no matching of size $|V_r|$ in $H$,
then there is no reticulation-visible orientation consistent with the guessed
placement of reticulations on vertices and sides of $\gen{B}$.

\algtoappendix{Reticulation-Visible Networks}{\AlgRV}{
	\begin{algorithm}[H]
		\caption{Orienting into Reticulation-Visible Networks (\AlgRV)}
		\label{alg:rv}
		\begin{algorithmic}[1]
			\Require A subcubic undirected graph $G$
			\Ensure An orientation of $G$ into a reticulation-visible network, if existent
			\For{each possible rooting edge}
			\State AllBlobsFine $\gets$ true
			\For{each blob}
			\State ThisBlobFine $\gets$ false
			\State Compute the undirected generator $\Ggen$
			\For{each set of reticulations $R$ and each root edge $e^*$}
			\If{the class $(\Ggen, R, e^*)$ cannot be oriented into a phylogenetic network}
			\State \textbf{continue}
			\EndIf
			\State Filter $R$ to unresolved reticulations with a BFS
			\State Construct instance $H$ of \MatchBG
			\State \qquad with \Cref{cons:RV-to-Matching}
			\If{$H$ has a matching~$M$ of size~$|R|$}
			\State Place reticulations on undetermined sides according to the matching
			\State ThisBlobFine $\gets$ true
			\EndIf
			\EndFor
			\If{\textbf{not} ThisBlobFine}
			\State AllBlobsFine $\gets$ false
			\State \textbf{break}
			\EndIf
			\EndFor
			\If{AllBlobsFine}
			\State \Return the oriented graph
			\EndIf
			\EndFor
			\State \Return ``No possible Reticulation-Visible Orientation''
		\end{algorithmic}
	\end{algorithm}
}

\newcommand{\thmrv}[1]{
\begin{theorem}#1
	\PROB{Reticulation-Visible Orientation} can be solved in $\Oh(12.2071^\ell
	\cdot n^2)$ time.
\end{theorem}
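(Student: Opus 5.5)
The proof follows the template of \cref{thm:temp}: a correctness part with two directions, and a running-time part.

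\proofpara{Soundness} Reticulation-visible networks are blob-determined, so it suffices to treat one blob $B$ with fixed root and a fixed guess $(V_r,E_r)$. Suppose the algorithm outputs an orientation. Then the sides not in $E_r$ are oriented by the unique orientation from \cite{huber2024orienting}, so the result is a rooted binary network. We check visibility in three groups.
\begin{itemize}
\item A reticulation lying on a side is visible from the leaf pendant to the neighbouring internal vertex.
\item A reticulation removed by the BFS filter is visible either directly from a leaf or from another reticulation. In the latter case we use the Observation together with induction on a topological order of reticulations; acyclicity ensures the chain of "visible from" relations ends at a leaf.
\item A reticulation $r$ remaining in $V_r$ is matched to a side $\{u,v\}$, oriented so that $r$ reaches $u$ without passing another reticulation, and the leaf at $u$'s neighbour on $P_{\{u,v\}}$ hangs below a tree-vertex. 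Every root path to that leaf must enter through $u$.
\end{itemize}
The key claim is that every root path to $u$ passes through $r$. The path from $r$ to $u$ uses only tree-vertices, and each tree-vertex has in-degree $1$. Walking backwards from $u$ along this path therefore forces the route through $r$.

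A subtlety is that the paths found in $B'$ must still exist after fixing the undetermined sides. They do, because the in-$B'$ paths avoid reticulations, and the undetermined sides only contribute reticulation vertices there. So fixing those sides does not destroy tree-vertex-only paths to the endpoints $u,v$.

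\proofpara{Completeness} Take a reticulation-visible orientation. It is consistent with some root and some guess $(V_r,E_r)$, and it agrees on all sides outside $E_r$ by uniqueness. For each reticulation $r$ still in $V_r$ after filtering, fix a leaf $x_r$ from which $r$ is visible. Follow the path from $r$ towards $x_r$ in the true orientation. This path cannot reach a leaf via only already-determined tree-vertices, since otherwise BFS would have found it.

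Using \cref{lem:filter-correct}, we argue that the first reticulation $r$ fails to avoid lies on an undetermined side $e_r$ whose endpoint $r$ reaches reticulation-free, and that $x_r$ is the leaf on $e_r$. Indeed, if the path entered a generator reticulation $r'$ or a $1$/$\ge3$-side reticulation, then $r$ would be visible from $r'$ or from a side reticulation. The BFS would then have detected a double encounter, or $r'$ would be reachable avoiding $r$, contradicting visibility. This is the main obstacle: showing that the visibility witness of an unfiltered reticulation must be a leaf of an undetermined side adjacent in $H$.

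Once that is established, we show the map $r\mapsto e_r$ is injective. A $2$-side carries one reticulation, so only one of its two leaves sits below a tree-vertex reachable from outside, and both endpoints' routes enter distinct internal vertices. If two reticulations shared $e_r$, both would have to dominate the same entering endpoint. But two distinct reticulations connected to $u$ only by tree-vertex paths cannot both dominate $u$. This gives a matching of size $|V_r|$.

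\proofpara{Running time} There are $\Oh(n)$ root choices. For each blob there are $\Oh(12.2071^\ell)$ guesses by \cref{lem:num-equivalence-classes}. Per guess the costs are:
\begin{itemize}
\item orienting via \cite{huber2024orienting} in $\Oh(|B|)$;
\item the BFS filter, with $\Oh(\ell)$ searches on the $\Oh(\ell)$-size generator-level graph;
\item matching in $H$, whose size is $\Oh(\ell)$, in $\mathrm{poly}(\ell)$ time, absorbed since $\mathrm{poly}(\ell)\cdot 12.2071^\ell=\Oh(12.2071^\ell)$ after rounding the constant.
\end{itemize}
Extending to $B$ is linear. Summing over blobs and roots gives $\Oh(12.2071^\ell\cdot n^2)$.
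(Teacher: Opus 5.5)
Your proposal is correct and follows the paper's proof. Like the paper, you fix the root and a reticulation guess, orient the determined sides uniquely, and filter $V_r$ by the BFS in $B'$. You then show that a reticulation-visible orientation consistent with the guess exists if and only if $H$ has a matching saturating $V_r$, and obtain the bound from $\Oh(n)$ root choices times $\Oh(12.2071^\ell)$ guesses per blob.

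Two small things to tidy:
\begin{itemize}
\item A reticulation on a side is visible from its own pendant leaf, which is its unique child. It is not visible from the leaf of a neighbouring internal vertex, since that vertex is one of its parents.
\item The filter may remove $r$ because the placeholder reticulation of an undetermined side in $B'$ is met twice. Then the Observation does not apply verbatim, because the placeholder is not a vertex of the output. However, both endpoints of that side are dominated by $r$, so the leaf of whichever internal vertex becomes a tree-vertex witnesses the visibility of $r$.
\end{itemize}
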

}
\thmrv{[\appendixstar]\label{thm:rv}}
\newcommand{\thmrvrt}{
	\proofpara{Running time} We consider $\Oh(n)$ possible placements of the root.
	Then we consider $\Oh(12.2071^\ell)$ possible placements of reticulations for
	each blob, independently for each blob. Per reticulation placement in each
  blob $B$, checking whether there exists an orientation of $\gen{B}$ consistent
	with this guess using the algorithm of~\cite{huber2024orienting} and
	filtering reticulations via BFS take $\Oh(\ell)$ time. Constructing $H$ takes
	$\Oh(\ell)$ time since $H$ contains at most two edges per undetermined side.
	Because vertices in $U$ have degree at most 2 in $H$, \MatchBG takes
	$\Oh(\ell)$ time~\cite{schrijver}. Producing the final orientation takes
	$\Oh(\ell)$ time. Summing over all blobs, this gives a total cost of
	$\Oh(12.2071^\ell \cdot n^2)$.
}
\newcommand{\thmrvproof}[1]{
\begin{proof}
	\proofpara{Correctness}
  For each placement of the root, we consider all possible placements of
  reticulations within each blob $B$.  Thus, if there is a reticulation-visible
  orientation of $B$ consistent with the guessed root, then it is consistent
  with one of the reticulation guesses we make. For this guess, the directed
  BFS correctly removes reticulations that are visible or become visible once
  another reticulation becomes visible, leaving only those reticulations that
  need to be made visible from undetermined sides. \Cref{cons:RV-to-Matching}
  ensures that $r$ is adjacent to a side $e$ in $H$ if and only if $r$ can reach
  an endpoint of $e$ without passing through another reticulation. Since each
  undetermined side has a reticulation on it, only one endpoint of such a side
  is visible from a leaf attached to this side, that is, only one reticulation
  in $V_r$ is visible from a leaf on this side. Thus, a valid orientation gives
  rise to a matching of $H$ that matches all vertices in $V_r$, and it is easy
  to verify that the converse also holds.
	#1
\end{proof}
}
\thmrvproof{}
\thmtoappendix{thm:rv}{\thmrv{}}{
\thmrvproof{
	
	\thmrvrt}
} 

\subsection{Tree-Sibling Networks}
\label{sec:ts}

\begin{definition}
	A phylogenetic network is \emph{tree-sibling} if each reticulation has a
  sibling that is a tree-vertex, or equivalently, if one parent of each
  reticulation has a tree-child.
\end{definition}

First note that any reticulation we place on an undetermined side satisfies
this condition, as this side is a $2$-side and the other internal vertex of the
side becomes a tree-vertex that is a parent of the reticulation and whose other
child is a leaf.  Thus, we only need to concern ourselves with ensuring that
every reticulation that is a generator vertex or an internal vertex of a
$1$-side has a tree-sibling.  We reduce this problem to \TSAT.  In this
formulation, we associate a Boolean variable $x_z$ with every internal vertex $z$
of an undetermined side.  This variable is true if we make $z$ a tree vertex.
Thus, for the two internal vertices $y$ and $z$ of each undetermined side, we
impose the condition that $x_y = \neg x_z$.

Now consider a reticulation $r$ that is a generator vertex or an internal
vertex of a $1$-side, and let $u$ and $w$ be its parents.  Let $u'$ be the
other child of $u$, and let $w'$ be the other child of $w$.  If neither $u'$
nor $w'$ is an internal vertex of an undetermined side, then we can simply
check whether one of them is a tree-vertex.  Similarly, if, say, $u'$ is an
internal vertex of an undetermined side and $w'$ is not, then $r$ has a
tree-sibling if $w'$ is a tree-vertex.  If $w'$ is a reticulation, then we need
to make $u'$ a tree-vertex, which we ensure by adding the clause~$\{x_{u'}\}$
to the \TSAT formula.  If both $u'$ and $w'$ are on undetermined sides, then
we need to make at least one of them a tree-vertex, which we ensure by
adding the clause~$\{x_{u'}, x_{w'}\}$ to the \TSAT formula.

\algtoappendix{Tree-Sibling Networks}{\AlgTS}{
\begin{algorithm}[H]
	\caption{Orienting into Tree-Sibling Networks (\AlgTS)}
	\label{alg:ts}
	\begin{algorithmic}[1]
		\Require A subcubic undirected graph $G$
		\Ensure An orientation of $G$ into a tree-sibling network, if existent
		\For{each possible rooting edge}
		\State AllBlobsFine $\gets$ true
		\For{each blob}
		\State ThisBlobFine $\gets$ false
		\State Compute the undirected generator $\Ggen$
		\For{each set of reticulations $R$ and each root edge $e^*$}
		\If{the class $(\Ggen, R, e^*)$ cannot be oriented into a phylogenetic network}
		\State \textbf{continue}
		\EndIf
		\State Construct the \TSAT instance $\phi$
		\If{$\phi$ is satisfiable}
		\State Place reticulations on 2-sides according to the satisfying assignment
		\State ThisBlobFine $\gets$ true
		\EndIf
		\EndFor
		\If{\textbf{not} ThisBlobFine}
		\State AllBlobsFine $\gets$ false
		\State \textbf{break}
		\EndIf
		\EndFor
		\If{AllBlobsFine}
		\State \Return the oriented graph
		\EndIf
		\EndFor
		\State \Return ``No possible Tree-Sibling Orientation''
	\end{algorithmic}
\end{algorithm}
}

\newcommand{\thmts}[1]{
\begin{theorem}#1
	\PROB{Tree-Sibling Orientation} can be solved in $\Oh(12.2071^\ell \cdot n^2)$ time.
\end{theorem}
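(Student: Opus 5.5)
The proof follows the same template as \cref{thm:temp,thm:rv}: guess the root, then guess the reticulation placement for each blob, and finally decide the remaining choices in polynomial time. Here the remaining choices are decided via \TSAT instead of \TTsOshort or matching.

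\textbf{Setup.} We guess one of the $\Oh(n)$ root placements. Since tree-sibling networks are blob-determined (a reticulation and its parents and siblings lie in the same blob, or a sibling is a cut-edge endpoint, hence a tree vertex or leaf), this fixes the root of every blob, and each blob can be treated independently. For each blob $B$ we enumerate the $\Oh(12.2071^\ell)$ reticulation placements of \cref{lem:num-equivalence-classes}. For each placement we use the algorithm of \cite{huber2024orienting} to orient all sides not in $E_r$, or discard the guess if this fails. We place reticulations on $\ge 3$-sides at a safe interior vertex and orient $1$-sides in $E_r$ as forced. Only the undetermined $2$-sides remain, each with two possible orientations, encoded by $x_y=\neg x_z$ for its internal vertices $y,z$.

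\textbf{Correctness.} For soundness, a satisfying assignment orients every undetermined side, and the result is an acyclic phylogenetic network. Acyclicity must be argued: every undetermined side contains a reticulation, so its two edges at the endpoints point inward regardless of the choice, and no directed path passes through $P_e$. Hence acyclicity is decided entirely by the already-fixed part. Now we check the tree-sibling condition.
\begin{itemize}
\item A reticulation on an undetermined side has its sibling-parent inside the side, with a leaf as the other child, so it is fine.
\item A reticulation on a $\ge 3$-side placed away from the endpoints has a tree-vertex sibling, namely a neighbouring internal vertex on the path.
\item For a reticulation $r$ at a generator vertex or on a $1$-side, the sibling candidates $u',w'$ are either of fixed type or are internal vertices of undetermined sides. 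The clause construction described before the theorem exactly encodes "$u'$ or $w'$ is a tree vertex". If neither sibling is on an undetermined side and neither is a tree vertex, the guess is rejected.
\end{itemize}
For completeness, a tree-sibling orientation of $B$ matches some guess, agrees with the forced orientations (which are unique by \cite{huber2024orienting}), and its choices on undetermined sides give a satisfying assignment, since each clause is a necessary condition. We must also justify that choosing the reticulation position on $\ge 3$-sides and the forced choices on $1$-sides lose nothing. This holds because moving the reticulation inside a $\ge 3$-side only changes which internal vertex is the reticulation, and never turns a sibling of another reticulation from a tree vertex into a reticulation, since endpoint neighbours remain tree vertices.

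\textbf{Running time.} Constructing the formula takes $\Oh(|B|)$ time, with $\Oh(\ell)$ variables and clauses, and \TSAT is solvable in linear time. Summing over blobs and root placements gives $\Oh(12.2071^\ell\cdot n^2)$.

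\textbf{Main obstacle.} The delicate step is the case analysis on the siblings $u',w'$ when a parent of $r$ is itself an endpoint of an undetermined side, or when $u'$ lies on a $\ge 3$-side. I expect to handle this by noting that the child of a generator vertex along an undetermined side is always the internal vertex adjacent to it, whose type is exactly the variable $x$. If $u'$ lies on a $\ge 3$-side, it is a tree vertex by our placement.
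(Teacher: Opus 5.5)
Your proposal is correct and follows the paper's proof: guess the root and the reticulation placement in each blob, fix all determined sides via \cite{huber2024orienting}, and encode the remaining tree-sibling requirements on undetermined $2$-sides as a \TSAT instance solved in $\Oh(|B|)$ time per guess. Your acyclicity argument and your argument for relocating reticulations inside $\ge 3$-sides fill in details the paper leaves implicit; one small slip is that for an interior reticulation $v_j$ on a $\ge 3$-side, the neighbouring internal vertex $v_{j-1}$ is a parent rather than a sibling, and the condition holds because $v_{j-1}$'s other child is its pendant leaf.
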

}
\thmts{[\appendixstar]\label{thm:ts}}
\newcommand{\thmtsrt}{
	\proofpara{Running time}
	The analysis is analogous to the one in the proof of \cref{thm:rv}, with
	BFS and bipartite matching replaced with solving the \TSAT formula,
	which takes $\Oh(|B|)$ time per blob.
}
\newcommand{\thmtsproof}[1]{
\begin{proof}
\proofpara{Correctness}
  \looseness=-1
  We consider all possible placements of the root and all possible placements
  of reticulations within each blob $B$.  Thus, if there is a tree-sibling
  orientation of $B$, then it is consistent with one of the guesses we make.
  For this guess, the \TSAT formula encodes the condition that every
  reticulation that does not already have a tree-sibling based on the guesses
  we made has its tree-sibling on a neighbouring undetermined side.  Thus, there
  exists a tree-sibling orientation if and only if this \TSAT formula has a
  satisfying truth assignment.  The translation of this truth assignment into
  the corresponding orientation is straightforward.
  #1
\end{proof}
}
\thmtsproof{}
\thmtoappendix{thm:ts}{\thmts{}}{
\thmtsproof{
	
	\thmtsrt}
} 

\subsection{Networks of Optimal Scanwidth}

\label{sec:scanwidth}

As a final remark, we consider computing an orientation of minimum scanwidth,
as defined in~\cite{berry2020scanning}. The scanwidth of a graph is completely
determined by the placement of reticulations on vertices and sides of the
generators of its blobs, independent of its exact position on a side, and
scanwidth is blob-determined. Computing scanwidth takes $\tilde\Oh(4^n)$ time
for graphs of order~$n$. Since the generators of blobs have order at most $2\ell - 2$
and contain up to $\ell$ further vertices for reticulations on sides,\nz{It is weird to say that the order is at most $2\ell - 2$ and then that there are $\ell$ additional vertices.  I didn't have the energy to try to fix this though.} computing the scanwidth
resulting from each placement takes $\tilde\Oh(4^{3\ell})$ time per blob. Accounting for
global root guessing and summing over all blobs, the total time
is $\tilde\Oh((3125 /
256)^\ell \cdot 64^\ell \cdot n^2) = \tilde\Oh(781.25^\ell \cdot n^2)$,
    yielding the following theorem.

\begin{theorem}
	\label{thm:scanwidth}
	For an undirected network $\U$, an orientation of $\U$ of minimum scanwidth can
	be computed in $\tilde\Oh(781.25^\ell \cdot n^2)$ time.
\end{theorem}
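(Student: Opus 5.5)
The plan is to combine the reticulation-guessing framework of \cref{sec:reti-guessing} with an exact scanwidth computation on a graph of size $\Oh(\ell)$ per blob. The algorithm will guess the root of $\U$ among its $\Oh(n)$ edges, which fixes the root of every blob. Scanwidth is blob-determined, so the scanwidth of an orientation is the maximum over its blobs. For a fixed root placement we can therefore minimise each blob independently and take the maximum. The overall minimum is then the best of these values over all root placements.

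For a fixed blob $B$ with fixed root, I will enumerate the $\Oh((3125/256)^\ell)$ reticulation placements of \cref{lem:num-equivalence-classes}. For each placement, the algorithm of~\cite{huber2024orienting} computes the unique orientation of all sides not in $E_r$ in $\Oh(\ell)$ time, or reports that none exists, in which case we discard the guess. On every side in $E_r$ we place a single reticulation at an arbitrary internal vertex. This gives a directed graph $D$ consisting of $\gen{B}$ together with at most $\ell$ subdivision vertices, so $D$ has at most $3\ell$ vertices.

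The key structural claim, which the text preceding the theorem asserts, is that the scanwidth of the orientation of $B$ equals the scanwidth of $D$. It does not depend on where on a side the reticulation sits, and it is unaffected by pendant leaves and subdivision vertices. I would justify this in two steps. First, suppressing a degree-$2$ vertex that is neither a reticulation nor the root, or deleting a pendant leaf, does not change scanwidth, as long as the scanwidth is at least $1$; a blob has scanwidth at least $2$. Second, extensions of any tree-extension can be transferred back and forth: in an optimal extension the inserted path vertices can be placed consecutively without increasing any cut. I will also use that every orientation of $B$ arises from some guess, since its reticulations lie on generator vertices or sides. Minimising over the guesses then gives the optimum for $B$.

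Computing the scanwidth of $D$ exactly takes $\tilde\Oh(4^{3\ell})=\tilde\Oh(64^\ell)$ time with the known $\tilde\Oh(4^{|V|})$ algorithm. Summing over blobs, whose level is at most $\ell$ and whose number is $\Oh(n)$, and over the $\Oh(n)$ root placements gives $\tilde\Oh((3125/256)^\ell\cdot 64^\ell\cdot n^2)=\tilde\Oh(781.25^\ell\cdot n^2)$. To output an actual orientation, I keep the best guess for each blob and expand it, orienting the pendant paths as before. The main obstacle is the invariance of scanwidth under suppressing the caterpillar paths and under moving a reticulation within a side. I would prove it by the extension-transfer argument above, checking that every cut crossing an inserted path counts exactly one arc of that path, just as the corresponding side does.
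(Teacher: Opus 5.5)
Your proposal is correct and follows the paper's argument: guess the root, enumerate the $\Oh((3125/256)^\ell)$ reticulation placements per blob, fix the forced orientation, and run the $\tilde\Oh(4^{n})$ scanwidth algorithm on a graph with at most about $3\ell$ vertices, which gives $\tilde\Oh(781.25^\ell \cdot n^2)$. The paper only asserts that scanwidth does not depend on pendant leaves or on where a reticulation sits on a side; your argument supplies this, since deleting pendant leaves and suppressing in-degree-$1$/out-degree-$1$ vertices preserves scanwidth once it is at least $2$, which holds for every blob.
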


\section{Discussion}
\label{sec:discussion}

In this paper, we have presented FPT algorithms for the
orientation problem for several prominent phylogenetic network classes,
improving upon state-of-the-art results by restricting what needs to be guessed
to the generators of blobs instead of the blobs themselves. We believe our
spanning-tree-based and reticulation-guessing approaches can extend to orient
many other network classes.  The following are questions worthy of future
exploration:

While orienting into tree-child, orchard, and tree-based networks is \NP-hard,
\NP-hardness for reticulation-visible, temporal, tree-sibling, or galled
networks remains open.

Our approach for \PROB{Tree-Child Orientation} requires further branching once
the spanning tree is known.  Can the tree-child orientation problem be
solved in polynomial time after guessing the spanning tree or can the running time
at least be improved to $\Oh^*(c^\ell)$ for $c < 10$?

Does $\C$-orientation admit an FPT algorithm when parameterized by the
structure of the $0$-sides, noting that having many leaves on sides simplifies
the problem?

Can we efficiently orient an undirected network while optimizing other
biologically important features, such as parsimony or high
compatibility with a given set of gene trees?

\bibliography{ref}
\bibliographystyle{plainurl}

\appendixproofs
\appendixalgs

\end{document}

\renewcommand\thesection{C}
\section{Appendix}
\label{adx:tc}

\ifJournal
For each stretch, we iterate over the possible breaking points to find a
tree-child orientation.
We first reduce the number of candidates to obtain a faster algorithm.
\todosi{I suggest adding a $\backslash$ifJournal for this part as it does not provide improvements in theory.}

A vertex $u$ is a \emph{parent} of a leaf $v$ of $S^*$ if $uv \in S^*$.
A leaf $v$ of $S^*$ with parent $u$ is \emph{critical} if
\begin{enumerate}
	\item $v$ is incident with at least one 0-side or a 1-side outside of $S$,
	\item $k(uv) = 0$, and
	\item $u$ has an incident edge $e$ in $S$ with $k(e) < 3$.
\end{enumerate}
In other words, $u$ cannot find a tree-child on the edge $uv \in S^*$ and may
not have a large enough edge outside of $S$ to guarantee a tree-child
independently of~$v$.
For each critical leaf $v$, we may therefore need to ensure that $v$ is the
tree-child of $u$, depending on how the edge $e$ is oriented.

\begin{lemma}
	\label{lem:critical-breaking-points}
	Let $G$ be orientable into a tree-child network and let $P = v_0, \dots,
	v_{t+1}$ be a stretch.
	If $P$ contains a 0-side $e$, then $e$ is the breaking point.
	Otherwise, there exists an orientation of $G$ into a tree-child network
	such that the breaking point of $P$ is in $\{0, t\}$ or in $\{j-1, j\}$
	for some critical vertex $v_j$.
\end{lemma}
We call these breaking points \emph{critical breaking points}.
\begin{proof}
	If $P$ contains a 0-side $e = \{v_i, v_{i+1}\}$, then
	neither $v_i$ nor $v_{i+1}$ can find a tree-child on $e$.
	By \Cref{lem:breaking-point}, $e$ is therefore the unique breaking point.
	
	Now assume $P$ contains no 0-side, so every edge in $P$ is a 1-side or a 2-side.
	Let a tree-child orientation of $G$ be given with breaking point $b$.
	If $b \in \{0, t\}$ or if $v_b$ or $v_{b+1}$ is critical, nothing remains
	to show.
	Assume otherwise and without loss of generality that $v_0 >_\sigma
	v_{t+1}$ in the topological ordering $\sigma$.
	We show that the breaking point can be shifted rightward to the next
	critical breaking point $p$, defined as the smallest index with $p > b$
	such that $p = t$ or $v_{p+1}$ is critical.
	
	Since only $v_0$ and $v_{t+1}$ can have outgoing edges to vertices outside
	$P$, and since $v_0 >_\sigma v_{t+1}$, we may assume $v_a >_\sigma
	v_{t+1}$ for each $a \in [b]$.
	We can further assume $v_{b+1} <_\sigma \dots <_\sigma v_p <_\sigma
	v_{t+1}$ (medium edges permit other orderings, but this one always exists).
	Shifting the breaking point from $b$ to $p$ reverses the orientation of
	the portion $v_{b+1}, \dots, v_p$, which preserves acyclicity since the
	relative order of these vertices and their relation to $v_{t+1}$ is
	maintained.
	
	It remains to verify that the tree-child condition is preserved.
	Let $v_q$ for $q \in \{b+1, \dots, p-1\}$ be a vertex that was a
	tree-vertex before shifting but is no longer one afterward.
	This occurs when $\{v_{q-1}, v_q\}$ is medium and $\{v_q, v_{q+1}\}$ is
	small, so that after shifting, $v_q$ has its incoming arc from the small
	edge and no longer finds a tree-child on the medium edge.
	Since $v_q$ is not critical, its parent $u_q$ in $S^*$ satisfies either
	$k(u_q v_q) > 0$ or $u_q$ is incident with a large edge in $S$.
	In the first case, $u_q$ has subdivision vertices with attached leaves on
	the path $u_q v_q$, giving $u_q$ a tree-child independently of $v_q$.
	In the second case, $u_q$ finds a tree-child on its large $S$-edge.
	Hence $u_q$ does not depend on $v_q$ as its tree-child, and the tree-child
	condition is preserved.
	This completes the proof.
\end{proof}

Our strategy is to apply \Cref{rr:deg-2,rr:blob} and then iterate over the
options to orient stretch, checking whether the remaining edges
can be oriented into a tree-child network.
By \Cref{lem:leaf-paths-structure}, this is sufficient for leaf-paths.

\begin{lemma}
	\label{lem:critical-breaking-points-cycles}
	Let $G$ be orientable into a tree-child network and let $C = v_1, \dots,
	v_t, v_1$ be a leaf-cycle containing no large edges.
	Then there exists an orientation of $G$ into a tree-child network such that
	the breaking point of $C$ lies in $\{j-1, j\}$ for some critical vertex
	$v_j$.
\end{lemma}
This can be proven analogously to \Cref{lem:critical-breaking-points}, using
the fact that each subdivision vertex on a medium edge has an attached leaf and
thus always satisfies the tree-child condition independently.
\fi

It remains to consider tiny internal edges.
If $u$ or $w$ still needs to find a tree-child, no tree-child orientation
exists for this directed spanning tree and these breaking points, since a tiny edge provides
no subdivision vertices and hence no tree-child for either endpoint.
If $u \to w$, direct $e$ toward $w$.
If the parent $u'$ of $u$ already has a tree-child other than $u$---for
instance because $u'u$ is tiny or because $u'$ is incident with a large internal
edge---then orient $e$ toward $w$; the analogous rule applies to $w$.
If none of these rules resolve $e$, we call $e$ a \emph{block}.
The internal edges incident with the parents of the endpoints of a block are
called \emph{roofs}.

\begin{observation}
	\label{obs:roofs}
	Each roof $e = \{u, w\}$ whose endpoint $u$ (or $w$) is not the parent of
	a vertex incident with a block can be oriented.
\end{observation}
\begin{proof}
	Since $u$ is not the parent of a vertex incident with a block, we can
	determine whether the child of $u$ on the block edge is a reticulation.
	Hence \Cref{rr:orient-edges-tc} applies to $e$ and resolves it.
\end{proof}

\begin{observation}
	\label{obs:blocks}
	Each block $e = \{u, w\}$ for which it can be determined whether the parent
	$v$ of $u$ (or $w$) still needs a tree-child---for instance because the
	incident internal roof edge can be oriented---can be oriented.
\end{observation}
\begin{proof}
	If $v$ still needs a tree-child, orient $e$ away from $u$ so that $v$
	finds a tree-child on $u$.
	Otherwise, orient $e$ toward $u$.
	In both cases $e$ is resolved.
\end{proof}

\Cref{obs:blocks,obs:roofs} together imply two things.
First, a roof and a block cannot coincide, since if an edge were both, its endpoint's
parent would already have a tree-child, resolving the block immediately.
Second, each block can be matched with a roof, since the parent of each block
endpoint has an associated roof edge.
More precisely, for any collection of unresolved edges, there exists a set of
blocks $B$ and a set of roofs $R$ with $|B| = |R|$, such that each vertex
incident with a roof in $R$ is the parent of a vertex incident with a block in
$B$.
We call the set $B \cup R$ a \emph{house}.

We next show that we only need to guess two possible orientations for each house.

\begin{lemma}
	\label{lem:house}
	If $G$ is tree-child orientable, then once any block in a house is oriented,
	the orientations of all remaining edges of the house are uniquely determined.
\end{lemma}
\begin{proof}
	Orienting one block determines where its reticulation lies, which fixes
	whether the parent of the incident vertex still needs a tree-child.
	This in turn determines the orientation of the corresponding roof, which
	may fix another block, and so on.
	Since the house is finite and each orientation propagates deterministically,
	the entire house is resolved.
\end{proof}

It is therefore sufficient to iterate over all critical breaking points of
stretches and leaf-cycles without large edges, and then over the two possible
initial orientations of each house, before applying \Cref{rr:orient-edges-tc}
to all remaining internal edges.
If the resulting graph is tree-child, we are done; if no combination yields a
tree-child network, then $G$ with this directed spanning tree is not tree-child orientable.

\begin{algorithm}[tb]
	\caption{Orienting into Tree-Child Networks (\AlgTC)}
	\label{alg:tc}
	\begin{algorithmic}[1]
		\Require A subcubic undirected graph $G$
		\Ensure An orientation of $G$ into a tree-child network, if existent
		\For{each edge $e^*$ as root}
		\For{each blob}
		\State Compute the undirected generator $\Ggen$
		\For{each spanning tree $S$ of $\Ggen$}
		\State Let $S^*$ be the directed spanning tree obtained by orienting $S$ according to $e^*$
		\For{each combination of breaking points of leaf-cycles without large edges}
		\For{each combination of breaking points of stretches}
		\For{each orientation of house orientations}
		\State Orient all remaining edges with \Cref{rr:orient-edges-tc}
		\If{the resulting network is tree-child}
		\State \Return the oriented network
		\EndIf
		\EndFor
		\EndFor
		\EndFor
		\EndFor
		\EndFor
		\EndFor
		\State \Return ``No possible Tree-Child Orientation''
	\end{algorithmic}
\end{algorithm}

\begin{theorem}
	\label{thm:tc}
	\PROB{Tree-Child Orientation} can be solved in $\Oh(7.5425^\ell \cdot \ell \cdot n)$
	time.
\end{theorem}
\begin{proof}
	The correctness of \Cref{alg:tc} can be seen with \Cref{lem:critical-breaking-points,lem:critical-breaking-points-cycles,lem:house}.
	
	Let $\ell_1$ be the number of stretches and leaf-cycles without large edges,
	$\ell_2$ the number of critical leaves of $S^*$, $\ell_3$ the number of
	non-critical leaves of $S^*$, and $\ell_4$ the number of houses.
	The number of reticulations in each blob satisfies
	$\ell \geq 2\ell_1 + 2\ell_2 + \ell_3 + 2\ell_4$.
	Let $\ell^{(2,i)}$ be the number of critical leaves in stretch or
	leaf-cycle $i \in [\ell_1]$.
	The total number of critical breaking point combinations is at most
	$\prod_{i=1}^{\ell_1}(2\ell^{(2,i)} + 2) \leq 2^{2\ell_2 + \ell_1}$.
	The worst case $\ell_2 = \ell_3 = 0$ gives $\ell \geq 2(\ell_1 + \ell_4)$,
	so the branching factor per directed spanning tree is at most $2^{\ell_1} \cdot 2^{\ell_4}
	\leq 2^{\ell/2}$.
	By \Cref{obs:spanning}, iterating over all directed spanning trees of $\Ggen$ takes
	$\Oh(16/3^\ell)$ time per blob.
	For each directed spanning tree and each combination of breaking points and house
	orientations, applying \Cref{rr:orient-edges-tc} to all $\Oh(\ell)$ internal
	edges requires a reachability check per edge in $\Oh(\ell)$ time, giving
	$\Oh(\ell^2)$ per combination.
	Checking whether the resulting network is tree-child takes $\Oh(\ell^2)$ time
	per blob \cite{add}.\todos{Check and add citation.}
	The total running time per blob is therefore
	$
	\Oh\!\left(16/3^\ell \cdot 2^{\ell/2} \cdot \ell^2\right)
	$
	Summing over all $n_B$ blobs and using $\sum_{\text{blobs}} \ell \in \Oh(n)$,
	the overall running time is $\Oh(7.5425^\ell \cdot \ell \cdot n)$.
\end{proof}
We note explicitly that the theoretical worst case occurs when all non-directed spanning tree
edges lie in leaf-paths or in houses.\todos{Without hardness this is not relevant.}

Orienting into normal networks can be done analogously, with two modifications.
Recall that normal networks are tree-child networks without shortcuts.
Instead of iterating only over critical breaking points, we iterate over all
possible breaking point positions, to ensure that no shortcut is introduced;
this does not increase the asymptotic base of the exponent since the number of
non-critical positions per stretch is polynomial in $\ell$.
Additionally, \Cref{rr:orient-edges-tc} must be adapted to forbid orientations
that create shortcuts, and the final verification checks normality rather than
only the tree-child condition.
We omit further details.

\begin{corollary}
	\label{cor:normal}
	Orienting into normal networks takes $\Oh(7.5425^\ell \cdot \ell \cdot n)$
	time.
\end{corollary}

We say that two $0$-sides $e_0, e_t \notin S$ are \emph{linked} if there is a path $\langle e_0, \dots, e_t \rangle$ not in $S$ in which each side $e_i$, $i \in [t-1]$, is a 1-side or 2-side.

\begin{lemma}
	\label{lem:no-link}
	If $G$ can be oriented into a tree-child network, no two $0$-sides are linked.
\end{lemma}

\begin{proof}
	Consider an orientation $R$ of $G$, and assume, for the sake of
	contradiction, that there exist two $0$ sides $e_0, e_t \notin S$ that are
	linked by a path $\langle e_0, e_1, \dots, e_t \rangle$.  For each $i \in
	[t]_0$, let $v_i$ and $v_{i+1}$ be the endpoints of $e_i$. Since $e_0 =
	\{v_0, v_1\}$ is a 0-side and $v_1 \in L(S)$, $v_1$ can have its tree-child
	only in $P_{e_1}$. Thus, $v_1$'s neighbour in $P_{e_1}$ is a tree vertex, and
	$v_2$, or the neighbour of $v_2$ in $P_1$ if $P_1$ is a $2$-side, is a
	reticulation. Since $v_2 \in L(S)$, $v_2$ must therefore have its tree-child
	in $P_{e_2}$. Continuing inductively, we conclude that $v_t$ must have its
	tree-child in $P_{e_t}$, but this is impossible, as $e_t$ is a $0$-side.
\end{proof}

%% file: commands.tex
\usepackage[utf8]{inputenc}
\usepackage{amsmath}
\usepackage{amssymb}
\usepackage{amsthm}
\usepackage{enumerate}
\usepackage{dsfont}
\usepackage{thmtools}
\usepackage{mathtools}
\usepackage{graphicx}
\usepackage{multirow}
\usepackage[table]{xcolor}
\usepackage{tikz}
\usetikzlibrary {arrows.meta} 
\usetikzlibrary{graphs}
\usepackage{placeins}
\usepackage{nicefrac}
\usepackage{tabularx}
\usepackage{xspace}
\usepackage{algorithm}
\usepackage{algpseudocode}
\usepackage[T1]{fontenc}
\usepackage{etoolbox}
\usepackage{hyperref}
\usepackage{paralist}
\usepackage{booktabs}
\hypersetup{
	colorlinks,
	linkcolor={red!50!black},
	citecolor={blue!90!black!80},
	urlcolor={blue!80!black}
}
\usepackage{cleveref}

\usepackage{lineno}
\ifFinal
\usepackage[disable]{todonotes}
\linenumbers
\else
\usepackage[notref, notcite]{showkeys}
\usepackage{todonotes}
\fi

\usepackage{etoolbox}
\newcommand{\appendixproofs}{}
\newcommand{\toappendix}[1]{\gappto{\appendixproofs}{#1}}
\toappendix{
	\renewcommand\thesection{A}
	\section{Appendix}
	\label{adx:proofs}
	\renewcommand\thesection{A.\arabic{section}}
	\setcounter{section}{0}
	In this section of the appendix, we present proofs that were left out in the main part of the paper due to space constraint.
}
\newcommand{\appendixalgs}{}
\newcommand{\alginappendix}[1]{\gappto{\appendixalgs}{#1}}
\alginappendix{
	\renewcommand\thesection{B}
	\section{Algorithm Environments}
	\label{adx:algs}
	\renewcommand\thesection{B.\arabic{section}}
	\setcounter{section}{0}
	In this section of the appendix, we present the algorithms in an algorithmic-environment.
	We note that all algorithms are fully written out inline in the paper and this section is solely for an easier understanding.
}

\newcommand{\thmtoappendix}[3]{\toappendix{
		\section{Proof of~\Cref{#1}}
		\renewcommand\thetheorem{\ref{#1}}
		\renewcommand\thelemma{\ref{#1}}
		#2
		
		#3
	}
}

\newcommand{\algtoappendix}[3]{\alginappendix{
	\section{Orienting into~#1~(#2)}
	#3
	}
}

\newtheorem{rrule}{Reduction~Rule}{\upshape\itshape}{\upshape\rmfamily}
\newtheorem{construction}{Construction}{}{}
{\upshape\itshape}{\upshape\rmfamily}
{\upshape\itshape}{\upshape\rmfamily}
\usepackage{multirow}

\newcommand{\kommentar}[1]{}
\newcommand{\Oh}{\ensuremath{\mathcal{O}}}

\newcommand{\proofpara}[1]{\smallskip
	
	\noindent\textit{#1.}}

\newcommand{\NP}{{\normalfont{NP}}\xspace}

\newcommand{\Instance}{{\ensuremath{\mathcal{I}}}\xspace}
\newcommand{\Net}{{\ensuremath{\mathcal{N}}}\xspace}

\newcommand{\Ggen}{{\ensuremath{G_{\text{gen}}}}\xspace}

\newlength{\boxindent}
\newcommand{\problemdef}[3]{%
	\begin{trivlist}
		\item \leavevmode\tikz{\node [draw,inner sep=4.5pt] {\begin{minipage}{\textwidth-9.4pt}
					\raggedright
					\normalsize\textsc{#1}
					
					\smallskip
					
					\settowidth{\boxindent}{\textbf{Input:} }%
					\leavevmode\hangindent=\boxindent
					\textbf{Input:}~#2
					
					\leavevmode\hangindent=\boxindent
					\rlap{\textbf{Find:}}\phantom{\textbf{Input:}}~#3
			\end{minipage}};}
	\end{trivlist}
}

\newcommand{\PROB}[1]{{{\normalfont\textsc{#1}}}\xspace}

\newcommand{\TSAT}{\PROB{2-SAT}}
\newcommand{\MatchBG}{\PROB{Matching on Bipartite Graphs}}